\def\argmax{\mathop{\rm arg\,max}}
\newtheorem{lemma}{Lemma}
\newtheorem{proposition}{Proposition}
\newtheorem{remark}{Remark}
\def\ba{{\bf a}}
\def\bc{{\bf c}}
\def\be{{\bf e}}
\def\bff{{\bf f}}
\def\bh{{\bf h}}
\def\bs{{\bf s}}
\def\bu{{\bf u}}
\def\bv{{\bf v}}
\def\bw{{\bf w}}
\def\bA{{\bf A}}
\def\bB{{\bf B}}
\def\bC{{\bf C}}
\def\bD{{\bf D}}
\def\bE{{\bf E}}
\def\bF{{\bf F}}
\def\bH{{\bf H}}
\def\bI{{\bf I}}
\def\bP{{\bf P}}
\def\bQ{{\bf Q}}
\def\bR{{\bf R}}
\def\bX{{\bf X}}
\def\bY{{\bf Y}}
\def\cC{\mbox{$\mathcal{C}$}}
\def\cL{\mbox{$\mathcal{L}$}}
\def\cN{\mbox{$\mathcal{N}$}}
\def\bbC{\mbox{$\mathbb{C}$}}
\def\bbE{\mbox{$\mathbb{E}$}}
\begin{document}

\title{
Scalable Beamforming Design for Multi-RIS-Aided  MU-MIMO Systems with Imperfect CSIT
}
\author{Mintaek~Oh and Jinseok~Choi

\thanks{M. Oh and J. Choi are with the School of Electrical Engineering, Korea Advanced Institute of Science and Technology (KAIST), Daejeon, 34141, Republic of Korea (e-mail: {\texttt{\{ohmin, jinseok\}@kaist.ac.kr}}). 
}
}

\maketitle \setcounter{page}{1} 

\begin{abstract}
This paper presents a scalable beamforming design for maximizing the spectral efficiency (SE) of multi-reconfigurable intelligent surface (RIS)-aided communications through joint optimization of the precoder and RIS phase shifts in multi-user multiple-input multiple-output (MU-MIMO) systems under imperfect channel state information at the transmitter (CSIT).
To address key challenges of the joint optimization problem,
we first decompose it into two subproblems by deriving a proper lower bound.
We then leverage a generalized power iteration (GPI) approach to identify a superior local optimal precoding solution.
We further extend this approach to the RIS design using regularization;
we set a RIS regularization function to efficiently handle the unit-modulus constraints, and also find the superior local optimal solution for RIS phase shifts  under the GPI-based optimization framework.
Subsequently, we propose an alternating optimization method.
{\color{black}
Our proposed algorithm offers scalable multi-RIS beamforming in terms of computational complexity that scales linearly with the number of RISs, while achieving superior performance.
We further reduce the complexity with respect to the number of RIS elements by using diagonal approximation of the channel error covariance and avoiding direct matrix inversion.}
Simulations validate the proposed algorithm in terms of both the sum SE performance and the scalability.
\end{abstract}

\begin{IEEEkeywords}
Reconfigurable intelligent surface, imperfect channel state information, beamforming, generalized power iteration, alternating optimization, regularization.
\end{IEEEkeywords}

\section{Introduction}
A reconfigurable intelligent surface (RIS) has been recognized as a promising technique for future wireless communication systems \cite{liu2021reconfigurable}.
By adjusting numerous passive reflecting elements, the RIS
enhances wireless channel conditions with flexible control and configuration, resulting in performance improvements  with marginal increase in network power consumption.
For instance, in millimeter-wave (mmWave) communications, signals can easily experience a severe attenuation due to blockage
\cite{heath2016overview}.
To overcome this issue, the deployment of RIS can help creating an additional signal path, thereby increasing the coverage of mmWave systems  \cite{wang2020joint}.

Despite the significant potential of the RIS, obtaining precise wireless channel state information (CSI) poses a considerable practical issue in RIS-aided systems  \cite{lin2021channel}.
This challenge is exacerbated by the fact that the passive nature of RIS does not facilitate any signal transmission, reception, or processing \cite{yao2023robust}.
In addition, deploying more RISs 
introduces significant burden on channel estimation, which can lead to substantial performance degradation due to degraded channel estimation accuracy.
This challenge motivates us to establish beamforming strategies for RIS-aided systems with inaccurate CSI.
In this regard, it is imperative to consider robust transmission design that accounts for imperfect CSI at the transmitter (CSIT) and a scalable beamforming strategy to offer a flexible beamforming solution tailored for multi-RIS-aided systems.

\subsection{Related Works}
Because of the promising potential of utilizing the RIS, there exist various analytical works  investigating  multi-RIS-aided networks.
Specifically, wireless networks assisted by two RISs demonstrate that the effective channel benefits from the superposition of the double-reflection link \cite{mei2022intelligent}. 
In \cite{han2020cooperative}, it was shown that if the double-reflection channel follows a line-of-sight (LoS) path, it can attain a higher order of passive beamforming gain compared to individual single-reflection links.
The outage probability and average symbol error probability for  Nakagami-$m$ fading of multi-RIS-aided systems were derived in \cite{aldababsa2023multiple}.
The result in \cite{aldababsa2023multiple} indicates that multi-RIS configurations can significantly enhance diversity order and system performance.
In \cite{li2023performance}, the spectral efficiency (SE) of multi-RIS-aided systems was analyzed with Poisson point processes.
{\color{black}It was also confirmed in \cite{al2024performance} that deploying more RISs up to an optimal number increases the network sum SE, and the optimal number can increase with proper RIS phase optimization.}
Overall, the literature emphasizes the potential benefit achieved by deploying multiple RISs.

Motivated by the potential of multi-RIS-aided systems, the RIS beamforming optimization has been actively investigated for multiple-input and multiple-output (MIMO) communication systems.
According to \cite{li2020weighted}, the problem of maximizing the weighted sum rate in a system supported by multiple RISs was tackled by jointly optimizing the precoder and RIS phase shifts.
In \cite{pan2020multicell}, the study extended the application of RIS-aided beamforming to multi-cell networks, advancing the methodology to encompass multi-RIS systems through the aggregation of all relevant channel matrices associated with RISs.
For a multiuser MIMO (MU-MIMO) downlink system, a codebook-based beamforming method was proposed to minimize transmit power while satisfying quality-of-service (QoS) constraints
\cite{huang2023multi}.
{\color{black}
A recent study has also explored optimization strategies for distributed RIS architectures \cite{chen2025joint}.}
It is often considered that the direct link between the base station (BS) and users is not available.
For instance, in \cite{do2022line}, the RIS-aided communication system was handled through MIMO transmission by considering an upper limit for the channel capacity of the RIS for the blocked direct link scenario.
In \cite{wang2020joint}, the hybrid beamforming design was proposed in the RIS-assisted point-to-point MIMO system without the consideration of the direct link.
In addition, the RIS can play a pivotal role in supporting various next-generation wireless communication systems, such as integrated sensing and communication as well as low-Earth orbit satellites \cite{ zheng2024cooperative, li2024joint, toka2024ris}.
Therefore, a comprehensive beamforming optimization framework is indispensable for a general multi-RIS-aided system to fully leverage these diverse applications and maximize system performance.

Considering the difficulty in channel estimation for the RIS-aided system, robust beamforming approaches have been proposed to compensate for channel estimation inaccuracies \cite{zhou2020framework,omid2021low,zeng2022joint,omid2023robust,jiang2023robust}.
{\color{black}Moreover, advanced mathematical frameworks for channel estimation \cite{wasfi2026unified} have been developed, broadening the scope of robust system design.}
In \cite{zhou2020framework}, a robust beamforming design was proposed to minimize transmit power under worst-case rate constraints with imperfect CSIT of cascaded channels.
The study found that substantial channel estimation errors can significantly degrade overall system performance, and  provided engineering insights for the size of the RIS when dealing with imperfect CSIT.
In \cite{jiang2023robust}, it was emphasized that precise channel estimation is crucial in achieving max-min fairness and QoS for multi-group systems, noting that deploying RIS may not yield performance gains and could potentially degrade system performance without accurate CSI.

Although several approaches have been successfully proposed by employing famous optimization frameworks such as manifold optimization and successive convex approximation techniques \cite{pan2020multicell,zhou2020intelligent,zargari2020energy,xiu2021irs,pan2022overview,zeng2022joint,shtaiwi2023sum,wang2023ris,yoon2023joint}, most approaches have achieved sub-optimal solutions, leaving a room for further performance improvement. 
In addition, some of the proposed approaches have limited scalability for multiple RISs in terms of the computational complexity.
Specifically, in \cite{pan2020multicell,pan2022overview,guo2020weighted},  manifold optimization (MO)-based algorithm and a majorization-minimization (MM)-based algorithm yield sub-optimal performance due to the inherently non-convex nature of the optimization problems.
Moreover, regarding the number of RISs $L$, the high-performing algorithms' complexity increases in the order of $\CMcal{O}(L^2)$ or even higher  \cite{li2020weighted, yoon2023joint, huang2023multi}.
Therefore, there is a need for more effective and scalable beamforming designs tailored for multi-RIS-aided systems which can surpass existing  sub-optimal and high-complexity methods, especially for imperfect CSIT scenarios.

\subsection{Contributions}
{\color{black}
We investigate multi-RIS-aided MU-MIMO downlink systems with imperfect CSIT.
Specifically, we propose a novel optimization framework that jointly optimizes the precoder and RIS phase shifts with achieving scalability in  both the number of RISs and their elements, significantly advancing the state of the art. 
The main contributions are summarized as follows:
}

\begin{itemize}
    \item 
    In the considered system, we assume that estimated channels and their  estimation error covariance are known at the BS.
    Under this assumption, we adopt the instantaneous SE  to fully exploit the available channel knowledge, which is desirable to maximize the performance gain.
    This instantaneous SE is the short-term SE expression that averages  over channel estimation error distribution.
    To make this metric tractable, we  derive a lower bound of the instantaneous SE.
    Consequently, our metric involves the channel error covariance as well as the estimated channel, allowing us to exploit the partial CSIT.
    Based on this approach, we can achieve a robust solution aiming at maximizing the sum SE under imperfect CSIT.

    \item Based on the derived instantaneous SE bound, we first optimize the precoder based on a generalized power iteration (GPI) method  \cite{choi2019joint}.
    To this end, we begin with decomposing the problem into two subproblems: $(i)$ the precoder optimization and $(ii)$ the RIS phase shifts optimization.
    Subsequently, deriving the first-order optimality condition for the precoder with given RIS phase shifts, we interpret the precoding optimization problem as a generalized eigenvalue problem.
    Subsequently, applying the GPI method, we find the superior local optimal precoding solution that is a principal eigenvector.

    \item 
    Regarding RIS phase shifts, we develop a regularized GPI method to effectively resolve the unit-modulus constraint and achieve multi-RIS scalability by utilizing a block-diagonal structure in the GPI method.
    Introducing a regularization term as a function of RIS phase shifts,
    the unit-modulus condition is controlled by the  regularization function without the explicit constraint.
    Through smooth approximation and  GPI-friendly reformulation of the problem, we apply the GPI method, ultimately achieving superior local optimal solution.

    \item 
    {\color{black} By leveraging block-diagonal matrices of the GPI framework, the overall complexity becomes lower compared to state-of-the-art methods.
    In particular, our approach enables scalable beamforming by reducing the computational complexity to a linear growth with the number of RISs, i.e., $\CMcal{O}(L)$.
    {\color{black}
    We further reduce the complexity with respect to the number of RIS elements by using diagonal approximation of the channel error covariance and avoiding direct matrix inversion.
    }
    }

    \item 
    Via simulations, we demonstrate that our proposed method outperforms baselines across various scenarios, achieving superior local solutions.
    We also empirically confirm that the regularized GPI approach nearly satisfies the unit-modulus constraint, which verifies the effectiveness of our regularized GPI approach.
    In addition, we show that our method exhibits its multi-RIS scalability     while achieving the highest sum SE.
    
\end{itemize}

\textit{Notation}:
The superscripts $(\cdot)^{\sf T}$, $(\cdot)^{\sf H}$, $(\cdot)^{*}$, and $(\cdot)^{-1}$ denote the transpose, Hermitian, complex conjugate, and matrix inversion, respectively. ${\bf{I}}_N$ is the identity matrix of size $N \times N$ and $\bf 0$ is a zero vector with proper dimension.
We use ${\rm diag}(\ba)$ for a diagonal matrix with $\ba$ on its diagonal elements.
Assuming that ${\bf{A}}_1, \dots, {\bf{A}}_N \in \mathbb{C}^{K \times K}$, ${\bf {A}} = {\rm blkdiag}\left({\bf{A}}_1,\dots, {\bf{A}}_N \right) \in \bbC^{KN\times KN}$ is a block-diagonal matrix.
$\|\bf A\|$ represents L2 norm.
We use ${\rm{tr}}(\cdot)$ for trace operator, ${\rm{vec}}(\cdot)$ for vectorization, $\otimes$ for Kronecker product, and ${\rm arg}(\cdot)$ for the argument of a complex number.
With mean $m$ and variance $\sigma^{2}$, we use $\cC\cN(m,\sigma^{2})$ for a circularly symmetric complex Gaussian distribution and $\cN(m,\sigma^{2})$ for a Gaussian distribution.

\section{System Model And Problem Formulation} \label{sec:sys_model}
\subsection{Signal Model}\label{subsec:channel}
\begin{figure}[!t]    
    {\centerline{\resizebox{0.95\columnwidth}{!}{\includegraphics{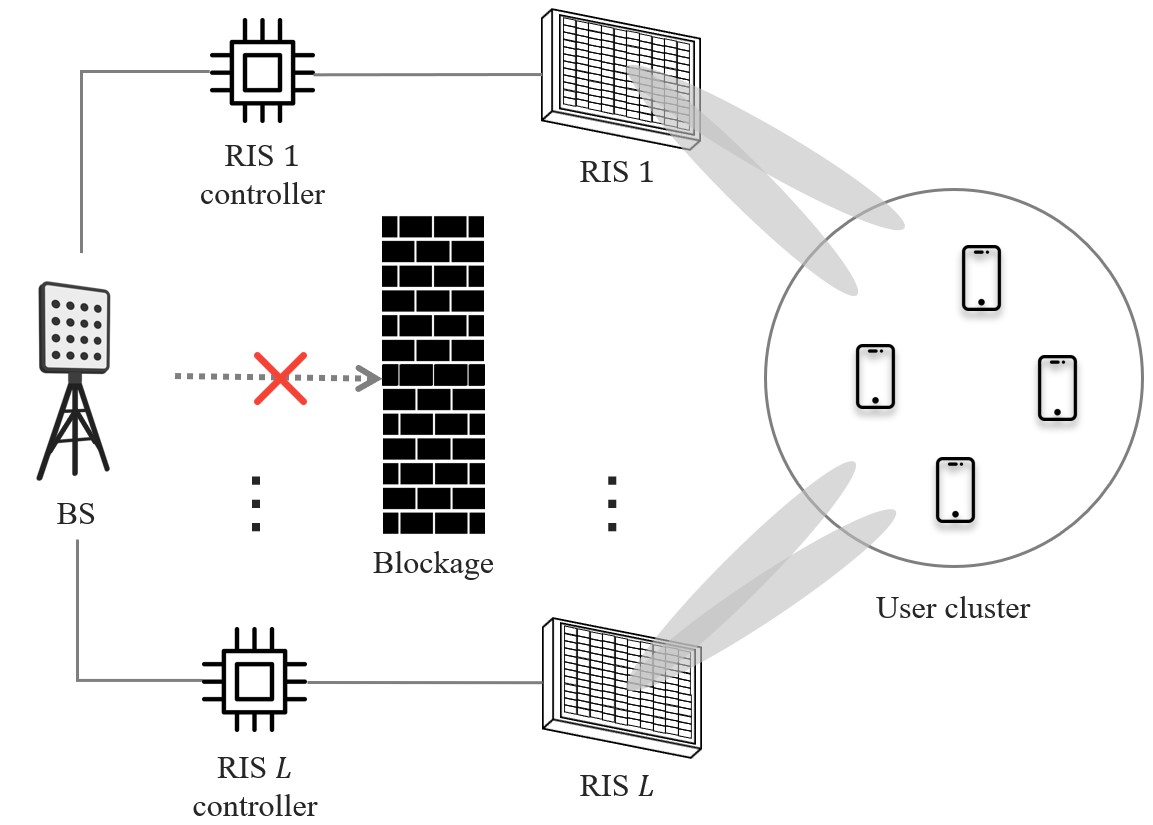}}}
    \caption{\color{black}An illustration of the   multi-RIS-aided downlink MU-MIMO system architecture.}
    \label{fig:system}}
\end{figure}
{\color{black}As shown in Fig.~\ref{fig:system},} we consider a multi-RIS-aided MU-MIMO downlink system where the BS equipped with $N$ antennas serves $K$ single-antenna users assisted with $L$ RISs.
Each RIS is equipped with $M$ reflecting elements: ${\boldsymbol{\Phi}}_{\ell} = {\rm diag}(\boldsymbol{\phi}_{\ell})$ where $\boldsymbol{\phi}_{\ell} = [\phi_{1,\ell}, \phi_{2,\ell}, \ldots, \phi_{M,\ell}]^{\sf T}$  with $\phi_{m,\ell} = e^{j\theta_{m,\ell}}, \forall m, \forall \ell$.
{\color{black}We denote the user set, RIS set, and set of indices of RIS reflecting elements as $\CMcal{K}=\{1,\dots,K\}$, $\CMcal{L}=\{1,\dots,L\}$, and $\CMcal{M}=\{1,\dots,M\}$, respectively.}
The BS determines the optimal phase shifts and conveys the information back to the corresponding RIS controller.
{\color{black}Assuming that the RIS is deployed in the vicinity of the BS, the control signaling can be implemented by an error-free link with negligible latency via wired links \cite{yao2023superimposed}.}
The BS broadcasts the data symbols $s_k \sim \cC \cN (0, P)$, $\forall k \in \CMcal{K}$ to each legitimate user via a precoder $\bF = [{\bf f}_1 , \ldots , {\bf f}_K] \in \bbC^{N \times K}$
where ${\bf f}_{k} \in \bbC^{N}$ indicates a precoding vector for $s_k$.
Then, a transmitted signal vector is given by
\begin{align}
    {\bf{x}} = \sum_{k=1}^{K}{\bf{f}}_k s_k = {\bF\bs} \in \bbC^{N},
\end{align}
where $\bs = [s_1, \dots, s_K]^{\sf T} \in \bbC^{K}$.

Let ${\bH}_{1,\ell} \in \mathbb{C}^{N \times M}$ and ${\bH}_{2,\ell} \in \mathbb{C}^{M \times K}$ denote an $\ell$th BS-RIS channel matrix and a $\ell$th RIS-user channel matrix, respectively.
For the link between RIS $\ell$ and user $k$, we consider $\bh_{2,k,\ell} \in \bbC^{N}$ based on a user channel matrix ${\bH}_{2,\ell} = [\bh_{2,1,\ell}, \cdots, \bh_{2,K,\ell}]$.
We assume that the direct link between transmitter and receiver is not available.
The systems assisted by multiple RISs often ignore secondary reflections between the surfaces, which is a reasonable oversight when the RISs are in each other’s far-field.
Thus, we assume a far-field propagation model\footnote{{\color{black}
In multi-RIS scenarios, inter-RIS interference from double-reflection links may occur.
However, compounded pathloss factors substantially attenuate the inter-RIS interference gain relative to the primary links. 
This theoretical insight justifies our assumption as both analytically tractable and practically acceptable for multi-RIS-aided systems.
}} 
so that the double-reflection link can be ignored \cite{pan2022overview}.
Accordingly, a received signal at user $k$ is given by
\begin{align}
    \label{eq:y_k}
    y_{k} \!=\! \left(\!\sum_{\ell=1}^{L}\bH_{1,\ell}{\boldsymbol{\Phi}}_{\ell} \bh_{2,k,\ell}\!\right)^{\sf H}\!\!\!{\bf{f}}_{k} s_{k} \!+\!\!\! \sum_{i = 1, i \neq k}^{K}\!\!\left(\!\sum_{\ell=1}^{L} \bH_{1,\ell} {\boldsymbol{\Phi}}_{\ell} \bh_{2,k,\ell}\!\right)^{\sf H}\!\!\!{\bf{f}}_{i} s_{i} \!+\! n_k,
\end{align}
where $n_k \sim \mathcal{CN}(0,\sigma^2)$ is additive white Gaussian noise (AWGN) of the $k$th user. 

Now, let us define the effective channel matrix between the BS and the $K$ users as 
\begin{align}
    \bH = \sum_{\ell=1}^{L} \bH_{1,\ell} {\boldsymbol{\Phi}}_{\ell} \bH_{2,\ell} = [{\bf h}_1, \dots, {\bf h}_K].
\end{align}
Then denoting the cascaded channel for user $k$ at RIS $\ell$ as $\bH^{\sf r}_{k,\ell} = \bH_{1,\ell} {\rm diag}(\bh_{2,k,\ell}) \in \bbC^{N \times M}$, we rewrite the effective channel vector of user $k$ as
\begin{align}
    \label{eq:effect_ch_k}
    \bh_k = \sum_{\ell=1}^{L} \bH_{1,\ell} {\boldsymbol{\Phi}}_{\ell}\bh_{2,k,\ell} = \sum_{\ell=1}^{L} \bH_{k,\ell}^{\sf r} {\boldsymbol {\phi}}_{\ell}.
\end{align}
We consider a block fading model where the channel is invariant within each transmission block.

\subsection{Channel Acquisition Model} \label{subsec:channel_est}
We  assume that the BS performs uplink channel estimation and thus, the BS only has partial knowledge of the channels due to estimation imperfections. 
In \cite{kundu2021channel,pan2022overview}, the RIS channel estimation is implemented for the cascaded channel under a time-division duplex mode.
For the cascaded channel, the estimated CSIT is given by
\begin{align}
    \hat{\bH}^{\sf r}_{k,\ell} = \bH_{k,\ell}^{\sf r} - \bE_{k,\ell},
\end{align}
or equivalently in a vectorized form, we have
\begin{align}
    \hat{\bc}_{k,\ell} = \bc_{k,\ell} - \be_{k,\ell},
\end{align}
where $\bE_{k,\ell}$ is a channel estimation error matrix, $\hat{\bc}_{k,\ell} = {\rm vec}(\hat{\bH}^{\sf r}_{k,\ell})$, $\bc_{k, \ell} = {\rm vec}(\bH_{k,\ell}^{\sf r})$, and $\be_{k,\ell} = {\rm vec}(\bE_{k,\ell})$.


According to \cite{pan2022overview}, using the linear minimum mean square error (LMMSE) estimator with $T_{\sf UL}$ uplink training length, the error covariance matrix is given by
\begin{align}
\label{eq:Err_Cov}
    \bR^{\sf e}_{k,\ell} \!=\! \bbE \left[\be_{k,\ell} \be_{k,\ell}^{\sf H}\right] \!=\! \left(\bC_{k,\ell}^{-1} + \frac{\rho_{\sf UL}}{\gamma_{k, \ell}\sigma^2}{\boldsymbol{\Psi}}_{\ell}^{\sf H} {\boldsymbol{\Psi}}_{\ell}\otimes \bI_{N} \right)^{-1}\!\!\!\!\!,
\end{align}
where $\rho_{\sf UL}$ is uplink training power, $\bC_{k,\ell}$ is the  cascaded channel covariance matrix, ${\boldsymbol{\Psi}}_{\ell}$ is the training phase shift matrix of $\ell$th RIS, and $\gamma_{k, \ell}$ denotes the pathloss of the cascaded channel determined by $\gamma_{k, \ell} = \gamma_{1,\ell} \gamma_{2,k,\ell}$ where $\gamma_{1,\ell}$ and $\gamma_{2,k,\ell}$ are the pathloss of the $\ell$th RIS-BS and RIS-user $k$, respectively.
While the error covariance matrix in \eqref{eq:Err_Cov} remains unaffected by the choice of training signals, the performance of channel estimation is influenced by other factors, particularly the pattern of ${\boldsymbol{\Psi}}_{\ell}$.
Hence, careful design of an appropriate training matrix  
 ${\boldsymbol{\Psi}}_{\ell}$ is crucial for optimizing the channel estimation process.
In this paper, we adopt the discrete Fourier transform (DFT)-based training scheme \cite{kundu2021channel}.
With the DFT scheme, the error covariance matrix is given by
\begin{align}
    \label{eq:error_cov}
    \bR^{\sf e}_{k,\ell} = \left(\bC_{k,\ell}^{-1} +  \frac{T_{\sf UL} \rho_{\sf UL}}{\gamma_{k, \ell}\sigma^2}  \bI_{N M}\right)^{-1}, 
\end{align}
where the sequence length for channel training needs to satisfy  $T_{\sf UL} \geq M$.
As the uplink training length and power increase to infinity, the error covariance in \eqref{eq:error_cov} goes to zero; thereby the CSIT error eventually vanishes.

In the presence of multiple RISs, {\color{black}
we assume the on-off scheme \cite{pan2022overview,xu2022channel} for each individual RIS, where each RIS is independently switched on and off by exploiting reflection and absorption characteristics of its elements \cite{basar2024reconfigurable}.}
Specifically, the $\ell$th cascaded channel is estimated while turning off all reflecting elements of RISs related to other cascaded channels.
Then, each cascaded channel estimation process is divided into $M$ stages in which each stage only estimates one column vector of $\bH^{\sf r}_{k,\ell}$ \cite{wei2021channel}.
Once the channel estimation is completed, all cascaded channels are combined to obtain \eqref{eq:effect_ch_k} for all users.
We adopt this channel estimation scheme with the LMMSE estimator by taking advantage of leveraging the error covariance matrix in $\eqref{eq:error_cov}$ for our system.


\subsection{Performance Metrics and Problem Formulation} \label{subsec:metrics}
With the effective channel in \eqref{eq:effect_ch_k}, the SE of user $k$ is
\begin{align}
    \label{eq:SE_1}
    R_k = \log_2 \left(1+\frac{\left|\bh_k^{\sf H}\bff_k\right|^2}{\sum_{i=1,i \neq k}^{K}\left|\bh_k^{\sf H}\bff_i\right|^2 + \frac{\sigma^2}{P}}\right).
\end{align}
Unfortunately, the BS cannot predict the SE in \eqref{eq:SE_1} for imperfect CSI.
To overcome this, we consider the instantaneous SE defined as $\bbE_{\{\bH_{k,\ell}^{\sf r}|\hat{\bH}^{\sf r}_{k,\ell}\}}\![\!\left.R_k \right| \hat{\bH}^{\sf r}_{k,\ell}\!]$ \cite{joudeh2016sum, lee2023max}.
While an ergodic SE represents the long-term SE achievable when channel coding spans extensive channel blocks, the instantaneous SE refers to the short-term SE expression that averages channel estimation error distribution in each channel realization.
We exploit the instantaneous SE to express the ergodic SE in our problem with imperfect CSIT.

According to \cite{joudeh2016sum}, the ergodic SE of user $k$ is defined as
\begin{align} 
    \label{eq:SE_2}
    \bbE_{\{\bh_k\}}\!\!\left[R_{k}\right]
    &\!=\! \bbE_{\{{\bH}^{\sf r}_{k,\ell}, {{\hat{\bH}^{\sf r}_{k,\ell}}}\}}\!\left[R_k\right] 
    \nonumber \\ \nonumber
    &\!=\! \bbE_{{{\{\hat{\bH}^{\sf r}_{k,\ell}}}\!\}}\!\!\left[\bbE_{\{{\bH}^{\sf r}_{k,\ell}\!|{{\hat{\bH}^{\sf r}_{k,\ell}}}\!\}}\!\!\left[R_k|\hat{\bH}^{\sf r}_{k,\ell}\right]\!\right] 
    \\
    &=\! \bbE_{\{{{\hat{\bH}^{\sf r}_{k,\ell}}}\!\}}\!\left[R^{\sf ins}_k\right],
\end{align}
where the expectation is taken over the randomness associated with the imperfect knowledge of the channel fading process and $R_k^{\sf ins}$ is the instantaneous SE.
From \eqref{eq:SE_2}, we define this instantaneous SE as
\begin{align}
    \label{eq:ins_SE}
    R^{\sf ins}_k \!=\! \bbE_{\{{{{\bE}_{k,\ell}}}\}}\!\!\left[\left.\log_2 \left(1 \!+\! \frac{\left|\left(\sum_{\ell=1}^{L}\bH^{\sf r}_{k,\ell}{\boldsymbol {\phi}}_{\ell}\right)^{\sf H}\!\!\!{\bf{f}}_k \right|^2}{\sum_{i=1, i \neq k}^{K}\!\left|\! \left(\sum_{\ell=1}^{L}\!\bH^{\sf r}_{k,\ell}{\boldsymbol {\phi}}_{\ell}\right)^{\sf H}\!\!\!{\bf{f}}_i \right|^2 \!\!\!+\! \frac{\sigma^2}{P}}\right)\right|\hat{\bH}^{\sf r}_{k,\ell}\right]\!\!.
\end{align}
However, \eqref{eq:ins_SE} is still not tractable because there exists no closed-form expression for the expectation of the CSIT error. 
To achieve a closed-form expression for the instantaneous SE, we introduce the following proposition.
\begin{proposition}
    \label{pro:ins_SE}
    Using the LMMSE estimator, we consider all channel estimation errors as uncorrelated noise with the transmitted signal.
    Accordingly, the lower bound of the instantaneous SE is derived as
    \begin{align}
        R_{k}^{\sf ins}\!&\geq\!
        \log_2 \!\left(\!1 \!+\! \frac{\left|\left(\sum_{\ell=1}^{L}\hat{\bH}^{\sf r}_{k,\ell}{\boldsymbol {\phi}}_{\ell}\right)^{\sf H} {\bf{f}}_k \right|^2}
        {\sum_{i=1, i \neq k}^{K}\!\left|\left(\sum_{\ell=1}^{L}\hat{\bH}^{\sf r}_{k,\ell}{\boldsymbol {\phi}}_{\ell}\right)^{\sf H}\!\!\!{\bf{f}}_i \right|^2 
        \!\!+\! 
        \sum_{i = 1}^{K}{\bf{f}}^{\sf{H}}_i
        {\boldsymbol{\Xi}}_k {\bf{f}}_i
        \!+\! \frac{\sigma^2}{P}}\!\right)
        \nonumber \\ \label{eq:ins_SE_f}
        &=\bar{R}_k^{\sf ins}({\bf F}, {\boldsymbol{\Phi}}),
\end{align}
where ${\boldsymbol{\Xi}}_k = \sum_{\ell=1}^{L}\left({\boldsymbol {\phi}}^{\sf T}_{\ell} \otimes \bI_N\right) \bR_{k,\ell}^{\sf e}\left({\boldsymbol {\phi}}^{\sf T}_{\ell} \otimes \bI_N\right)^{\sf H}$.
\begin{proof}
Using the LMMSE estimator, we can treat channel estimation errors as uncorrelated noise:
\begin{align}
    &y_{k} = 
    \left(\sum_{\ell=1}^{L}\hat{\bH}^{\sf r}_{k,\ell}\boldsymbol{\phi}_{\ell}\right)^{\sf H}\!\!\!{\bf{f}}_{k} s_{k} + \sum_{i = 1, i \neq k}^{K}\left(\sum_{\ell=1}^{L}\hat{\bH}^{\sf r}_{k,\ell}\boldsymbol{\phi}_{\ell}\right)^{\sf H}\!\!\!{\bf{f}}_{i} s_{i}
    \nonumber\\
    &\qquad\qquad\qquad + \underbrace{\sum_{i=1}^{K}\!\left(\sum_{\ell=1}^{L}{\bE}_{k,\ell}\boldsymbol{\phi}_{\ell}\right)^{\sf H}\!\!\!{\bf{f}}_{i} s_{i}}_{{\rm uncorrelated\; errors}}  + n_k.
\end{align}
Then, the lower bound of the instantaneous SE is derived by the following procedure from \eqref{eq:R_lb_a} to \eqref{eq:R_lb_c} {\color{black}at top of the next page.}
\begin{figure*}[h]
\vspace{-2em}
\begin{align} 
    \label{eq:R_lb_a}
    \bbE_{\{\hat{\bH}^{\sf r}_{k,\ell}\}}\left[R^{\sf ins}_k\right]
    &{\color{black}\stackrel{(a)}\geq \bbE_{\{{{\hat{\bH}^{\sf r}_{k,\ell}}}\}}\!\!\left[\bbE_{\{{{{\bE}_{k,\ell}}}\}}\left[\left.\log_2 \left(1 \!+\! \frac{\left|\left(\sum_{\ell=1}^{L}\hat{\bH}^{\sf r}_{k,\ell}{\boldsymbol {\phi}}_{\ell}\right)^{\sf H}\!{\bf{f}}_k \right|^2}{\sum_{i=1, i \neq k}^{K}\!\left|\! \left(\sum_{\ell=1}^{L}\hat{\bH}^{\sf r}_{k,\ell}{\boldsymbol {\phi}}_{\ell}\right)^{\sf H}\!{\bf{f}}_i \right|^2 \!+\! \sum_{i=1}^{K}\!\left|\! \left(\sum_{\ell=1}^{L}\hat{\bE}_{k,\ell}{\boldsymbol {\phi}}_{\ell}\right)^{\sf H}\!{\bf{f}}_i \right|^2 \!+\! \frac{\sigma^2}{P}}\right)\right|\hat{\bH}_{k,\ell}^{\sf r}\right]\right]}
    \\
    &{\color{black}\stackrel{(b)}\geq
    \label{eq:R_lb_b}
    \bbE_{\{{{\hat{\bH}^{\sf r}_{k,\ell}}}\}}\left[\log_2 \left(1 \!+\! \frac{\left|\left(\sum_{\ell=1}^{L}\hat{\bH}^{\sf r}_{k,\ell}{\boldsymbol {\phi}}_{\ell}\right)^{\sf H} {\bf{f}}_k \right|^2}
    {\sum_{i=1, i \neq k}^{K} \left|\left(\sum_{\ell=1}^{L}\hat{\bH}^{\sf r}_{k,\ell}{\boldsymbol {\phi}}_{\ell}\right)^{\sf H} {\bf{f}}_i \right|^2 
    +  \bbE_{\{{{{\bE}_{k,\ell}}}\}}
    \left[
    \sum_{i = 1}^{K} {\bf{f}}^{\sf{H}}_i\left(\sum_{\ell=1}^{L}\bE_{k,\ell} {\boldsymbol {\phi}}_{\ell} {\boldsymbol {\phi}}_{\ell}^{\sf H} \bE_{k,\ell}^{\sf H}\right) {\bf{f}}_i\right]
    \!+\! \frac{\sigma^2}{P}}\right)
    \right]}
    \\
    \label{eq:R_lb_c}
    &{\color{black}\stackrel{(c)}= \bbE_{\{{{\hat{\bH}^{\sf r}_{k,\ell}}}\}}\left[\log_2 \!\left(\!1 \!+\! \frac{\left|\left(\sum_{\ell=1}^{L}\hat{\bH}^{\sf r}_{k,\ell}{\boldsymbol {\phi}}_{\ell}\right)^{\sf H} {\bf{f}}_k \right|^2}
    {\sum_{i=1, i \neq k}^{K}\!\left|\left(\sum_{\ell=1}^{L}\hat{\bH}^{\sf r}_{k,\ell}{\boldsymbol {\phi}}_{\ell}\right)^{\sf H}\!\!\!{\bf{f}}_i \right|^2 
    \!\!+\! 
    \sum_{i = 1}^{K}{\bf{f}}^{\sf{H}}_i
    \sum_{\ell=1}^{L}\left({\boldsymbol {\phi}}^{\sf T}_{\ell} \otimes \bI_N\right) \bR_{k,\ell}^{\sf e}\left({\boldsymbol {\phi}}^{\sf T}_{\ell} \otimes \bI_N\right)^{\sf H} {\bf{f}}_i\!+\! \frac{\sigma^2}{P}}\!\right)\right]
    }
\end{align}
\hrule
\vspace{-1.2em}
\end{figure*}
{\color{black}
{\color{black}Specifically, $(a)$ in \eqref{eq:R_lb_a} is obtained by treating the channel estimation error as worst-case uncorrelated additive Gaussian noise. 
According to \cite{hassibi2003much}, under the uncorrelation condition, the worst-case noise that minimizes mutual information for a given covariance is Gaussian, leading to the worst-case capacity expression in the form of \eqref{eq:R_lb_a}.}
For $(b)$ in \eqref{eq:R_lb_b}, we use Jensen's inequality\footnote{
{\color{black}When the number of BS antennas $N$ is sufficiently large, the gap between the exact SE and the Jensen's bound becomes negligible, making our bound a robust and reasonable approximation for large-scale MIMO systems \cite{park2017dynamic}.}}.
For $(c)$ in \eqref{eq:R_lb_c}, we apply the vectorization, i.e., ${\rm vec}(\bA \bB) = (\bB^{\sf T} \otimes \bI_a) {\rm vec}(\bA)$ with  $\bA \in \bbC^{a\times b}$ and $\bB \in \bbC^{b \times c}$, to obtain $\boldsymbol{{\bE}}_{k,\ell}\boldsymbol{\phi}_{\ell} = (\boldsymbol{\phi}_{\ell}^{\sf T} \otimes \bI_N) \be_{k,\ell}$, and then express the channel estimation error term as its covariance matrix $\bR^{\sf e}_{k,\ell}$.}
Using this result, we can then calculate the expectation across all channel error terms in the second term of the denominator in \eqref{eq:R_lb_a} as $\bbE\left[(\boldsymbol{\phi}_{\ell}^{\sf T}\!\otimes \bI_N) \be_{k,\ell}\be^{\sf H}_{k,\ell}(\boldsymbol{\phi}_{\ell}^{\sf T} \otimes \bI_N)^{\sf H}\right] \!=\! (\boldsymbol{\phi}_{\ell}^{\sf T}\! \otimes \bI_N)\bR_{k,\ell}^{\sf e}(\boldsymbol{\phi}_{\ell}^{\sf T} \otimes \bI_N)^{\sf H},\; \forall \ell,k$.
This completes the proof.
\end{proof}
\end{proposition}


Thanks to the expression in \eqref{eq:ins_SE_f}, we are able to leverage the partial channel knowledge, i.e., the channel estimate and its error covariance.
Aiming to maximize the sum SE by jointly optimizing the precoder and RIS phase shifts, we formulate the optimization problem with \eqref{eq:ins_SE_f} as
\begin{align} 
    \label{eq:main_problem}
    \mathop{{\text{maximize}}}_{{\bf{f}}_1, \cdots,{\bf{f}}_K, {\boldsymbol{\Phi}}}& \;\; \sum_{k = 1}^{K} \bar{R}^{\sf ins}_{k}({\bf F}, {\boldsymbol{\Phi}})
    \\
    \label{eq:power_constraint}
    {\text{subject to}} & \;\;  \sum_{k = 1}^{K} \left\| {\bf{f}}_k \right\|^2  \le 1,
    \\
    \label{eq:unitmodulus_constraint}
    & \;\; |\phi_{\ell,m}| = 1,\;\; \forall \ell \in \CMcal{L}, \forall m \in \CMcal{M},
\end{align}
where ${\boldsymbol{\Phi}} = [{\boldsymbol{\Phi}}_1, {\boldsymbol{\Phi}}_2, \cdots, {\boldsymbol{\Phi}}_L] \in \bbC^{M \times LM}$. 
Since \eqref{eq:main_problem} is non-convex, it is challenging to obtain the globally optimal solution.
Additionally, the non-convex unit-modulus constraint in \eqref{eq:unitmodulus_constraint} further aggravates the challenge.
These characteristics render the joint optimization of precoding and RIS phase shifts a highly challenging task.
Thus, it is necessary to establish the efficient optimization framework to tackle the problem.

\section{Proposed Beamforming Design} \label{sec:Tx_design}
In this section, we aim to maximize the sum SE by solving the joint optimization problem in \eqref{eq:main_problem}.
To this end, we propose a joint optimization framework that decomposes the problem into two subproblems: $(i)$ the precoder $\bF$ optimization and $(ii)$ the RIS phase shifts  $\boldsymbol{\Phi}$ optimization.
We then solve these subproblems in an alternating manner under a unified framework to identify a superior local optimal solution.
{\color{black}Although this alternating strategy is inherently iterative and adds additional complexity, it is indispensable for decomposing the non-convex joint optimization problem into tractable subproblems.
Beyond the specific application to our system model, this approach has been widely validated, consistently yielding effective, high-quality solutions \cite{zhou2020intelligent, pan2022overview}.}

\subsection{Optimizing  Precoder ${\bF}$} \label{subsec:precoding}
In this subsection, we optimize the precoder $\bF$ while fixing the RIS phase shifts $\boldsymbol{\Phi}$.
To highlight this, we omit the notation ${\boldsymbol{\Phi}}$ from ${\bar{R}^{\sf ins}}_{k}({\bf F},{\boldsymbol{\Phi}})$ and thus, we have
\begin{align} 
    \label{eq:AO_precoder}
    \mathop{{\text{maximize}}}_{{\bf{f}}_1, \cdots,{\bf{f}}_K}& \; \sum_{k = 1}^{K}\;{\bar{R}^{\sf ins}}_{k}({\bf F})
    \\
    \label{eq:power_const2} 
    {\text{subject to}} & \;\sum_{k = 1}^{K} \left\| {\bf{f}}_k \right\|^2  \le 1.
\end{align}
Thanks to the reformulation of the original problem with the lower bound of instantaneous SE in \eqref{eq:ins_SE_f}, we can utilize the GPI approach in \cite{choi2019joint}.
To adopt the method, we first vectorize the precoding matrix as
\begin{align} 
    \label{eq:precodingvector}
    \bar{\mathbf{f}} = {\rm{vec}}(\bF) = \left[\mathbf{f}^\top_{1},\mathbf{f}^\top_{2},\ldots,\mathbf{f}^\top_{K}\right]^\top \in \mathbb{C}^{NK}.
\end{align}
Considering the maximum transmit power $P$, we set $\|\bar{\bf f}\|^2 = 1$ to achieve the maximum sum SE.
With the estimated cascaded channel of user $k$ denoted as $\hat{\bh}_k = \sum_{\ell=1}^{L} 
\hat{\bH}_{k,\ell}^{\sf r}{\boldsymbol {\phi}}_{\ell}$, we replace $\frac{\sigma^2}{P}$ with $\frac{\sigma^2}{P} \|{\bff}\|^2$ in \eqref{eq:ins_SE_f}, and rewrite the problem in \eqref{eq:AO_precoder} into more tractable expression, the function of Rayleigh quotient form, as
\begin{align} 
    \label{eq:Rayleigh_quotient}
    \mathop{{\text{maximize}}}_{\bar{\mathbf{f}}} \sum_{k=1}^{K}\log_2  \left(\frac{\bar{\mathbf{f}}^{\sf H}\mathbf{A}_{k}\bar{\mathbf{f}}}{\bar{\mathbf{f}}^{\sf H}\mathbf{B}_{k}\bar{\mathbf{f}}}\right),
\end{align}
where
\begin{align}
    \label{eq:A_k}
    &\mathbf{A}_k = \mathrm{blkdiag}\!\left(\hat{\bh}_{k} \hat{\bh}^{\sf H}_{k} + {\boldsymbol{\Xi}}_k,\cdots,\hat{\bh}_{k} \hat{\bh}^{\sf H}_{k} + {\boldsymbol{\Xi}}_k\right) + \mathbf{I}_{NK}\frac{\sigma^2}{P},
    \\
    \label{eq:B_k}
    &\mathbf{B}_k = \mathbf{A}_k - \mathrm{blkdiag} \left(\mathbf{0},\cdots, \hat{\bh}_{k}\hat{\bh}^{\sf H}_{k},\cdots, \mathbf{0}\right).
\end{align}
The second term on the right hand side in \eqref{eq:B_k} has a nonzero block located at the $k$th block entry.
We can ignore the transmit power constraint \eqref{eq:power_constraint} because of $\|\bar{\bf f}\|^2 = 1$ and scaling invariance of $\bar{\bf f}$ in \eqref{eq:Rayleigh_quotient}.
For simplicity, we define the objective function in \eqref{eq:Rayleigh_quotient} as
\begin{align}
    \label{eq:lamb_obj}
    \cL_{\sf BS}(\bar{\bf f}) = \log_2\prod_{k=1}^{K} \left(\frac{\bar{\mathbf{f}}^{\sf H}\mathbf{A}_{k}\bar{\mathbf{f}}}{\bar{\mathbf{f}}^{\sf H}\mathbf{B}_{k}\bar{\mathbf{f}}}\right)
    = \log_2 \lambda_{\sf BS}(\bar{\bf f}).
\end{align}
Then, we derive Lemma~\ref{lem:NEP} to find the  stationary points of \eqref{eq:lamb_obj}.
\begin{lemma}
    \label{lem:NEP}
    The stationary condition of the problem \eqref{eq:Rayleigh_quotient} is satisfied if the following holds:
    \begin{align}
        \label{eq:general_eigen}
        {\bar{\bB}}^{-1}(\bar{\mathbf{f}})\bar{\bA}(\bar{\mathbf{f}})\bar{\mathbf{f}} = \lambda_{\sf BS}(\bar{\mathbf{f}})\bar{\mathbf{f}},
    \end{align}
    where 
    \begin{align}
        \label{eq:A_KKT}
        &{\bar{\bA}}(\bar{\mathbf{f}}) = \lambda_{{\sf BS}, {\sf num}}(\bar{\mathbf{f}}) \sum_{k=1}^{K}\left(\frac{\mathbf{A}_k}{\bar{\mathbf{f}}^{\sf H}\mathbf{A}_{k}\bar{\mathbf{f}}}\right), 
        \\
        \label{eq:B_KKT}
        &{\bar{\bB}}(\bar{\mathbf{f}}) = \lambda_{{\sf BS}, {\sf den}}(\bar{\mathbf{f}}) \sum_{k=1}^{K}\left(\frac{\mathbf{B}_k}{\bar{\mathbf{f}}^{\sf H}\mathbf{B}_{k}\bar{\mathbf{f}}}\right),
    \end{align}
    and  $\lambda_{{\sf BS}, {\sf num}}(\bar{\mathbf{f}})$ and $ \lambda_{{\sf BS}, {\sf den}}(\bar{\mathbf{f}})$ are any functions that meet $ \lambda_{{\sf BS}} (\bar{\mathbf{f}})=  \lambda_{{\sf BS}, {\sf num}}(\bar{\mathbf{f}})/ \lambda_{{\sf BS}, {\sf den}}(\bar{\mathbf{f}})$.
\end{lemma}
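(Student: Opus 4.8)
The plan is to characterize the stationary points of the Rayleigh-quotient objective \eqref{eq:lamb_obj} directly via its first-order optimality condition, and then to algebraically rearrange that condition into the nonlinear generalized-eigenvalue form \eqref{eq:general_eigen}. First I would exploit the fact that the logarithm turns the product in \eqref{eq:lamb_obj} into a sum, so that $\cL_{\sf BS}(\bar{\mathbf{f}}) = \frac{1}{\ln 2}\sum_{k=1}^{K}[\ln(\bar{\mathbf{f}}^{\sf H}\mathbf{A}_{k}\bar{\mathbf{f}}) - \ln(\bar{\mathbf{f}}^{\sf H}\mathbf{B}_{k}\bar{\mathbf{f}})]$, which decouples the gradient into a per-user sum. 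Because $\log_2$ is strictly increasing and $\lambda_{\sf BS}>0$, a point is stationary for $\lambda_{\sf BS}$ exactly when it is stationary for $\cL_{\sf BS}$, so it suffices to work with the log form.

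Next I would compute the gradient using Wirtinger calculus, differentiating with respect to the conjugate $\bar{\mathbf{f}}^{*}$. The only structural facts needed are that $\mathbf{A}_k$ and $\mathbf{B}_k$ are Hermitian, which holds since $\hat{\bh}_k\hat{\bh}_k^{\sf H}$, the error term ${\boldsymbol{\Xi}}_k$ (a congruence transform of the Hermitian covariance $\bR^{\sf e}_{k,\ell}$), and $\mathbf{I}_{NK}\sigma^2/P$ are all Hermitian; hence $\partial(\bar{\mathbf{f}}^{\sf H}\mathbf{A}_k\bar{\mathbf{f}})/\partial\bar{\mathbf{f}}^{*} = \mathbf{A}_k\bar{\mathbf{f}}$. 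Applying the chain rule termwise and setting the gradient to zero then yields the stationarity identity $\sum_{k=1}^{K}\frac{\mathbf{A}_k\bar{\mathbf{f}}}{\bar{\mathbf{f}}^{\sf H}\mathbf{A}_k\bar{\mathbf{f}}} = \sum_{k=1}^{K}\frac{\mathbf{B}_k\bar{\mathbf{f}}}{\bar{\mathbf{f}}^{\sf H}\mathbf{B}_k\bar{\mathbf{f}}}$.

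Finally I would match this identity to the definitions \eqref{eq:A_KKT}--\eqref{eq:B_KKT}. By construction $\bar{\bA}(\bar{\mathbf{f}})\bar{\mathbf{f}} = \lambda_{{\sf BS},{\sf num}}\sum_k \mathbf{A}_k\bar{\mathbf{f}}/(\bar{\mathbf{f}}^{\sf H}\mathbf{A}_k\bar{\mathbf{f}})$ and $\bar{\bB}(\bar{\mathbf{f}})\bar{\mathbf{f}} = \lambda_{{\sf BS},{\sf den}}\sum_k \mathbf{B}_k\bar{\mathbf{f}}/(\bar{\mathbf{f}}^{\sf H}\mathbf{B}_k\bar{\mathbf{f}})$, so the stationarity identity reads $\bar{\bA}(\bar{\mathbf{f}})\bar{\mathbf{f}}/\lambda_{{\sf BS},{\sf num}} = \bar{\bB}(\bar{\mathbf{f}})\bar{\mathbf{f}}/\lambda_{{\sf BS},{\sf den}}$. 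The scalars $\lambda_{{\sf BS},{\sf num}},\lambda_{{\sf BS},{\sf den}}$ are precisely the book-keeping freedom the lemma leaves open; regrouping them gives $\bar{\bA}(\bar{\mathbf{f}})\bar{\mathbf{f}} = \lambda_{\sf BS}(\bar{\mathbf{f}})\bar{\bB}(\bar{\mathbf{f}})\bar{\mathbf{f}}$ with $\lambda_{\sf BS}=\lambda_{{\sf BS},{\sf num}}/\lambda_{{\sf BS},{\sf den}}$, and left-multiplying by $\bar{\bB}^{-1}(\bar{\mathbf{f}})$ produces \eqref{eq:general_eigen}. The gradient algebra is routine; the part that needs care is conceptual rather than computational, namely recognizing that \eqref{eq:general_eigen} is a \emph{nonlinear} eigenvalue problem because $\bar{\bA}$ and $\bar{\bB}$ themselves depend on $\bar{\mathbf{f}}$, and confirming that $\bar{\bB}(\bar{\mathbf{f}})$ is invertible so the inversion is legitimate. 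This invertibility follows from $\mathbf{B}_k\succ 0$, which is guaranteed by the $\mathbf{I}_{NK}\sigma^2/P$ regularization in \eqref{eq:A_k} (the $k$th diagonal block of $\mathbf{B}_k$ equals ${\boldsymbol{\Xi}}_k + \mathbf{I}_N\sigma^2/P\succ 0$), so $\bar{\bB}(\bar{\mathbf{f}})$ is a positive combination of positive-definite matrices.
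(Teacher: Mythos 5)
Your proposal is correct and takes essentially the same route as the paper's own proof: both compute the Wirtinger gradient of $\cL_{\sf BS}$ with respect to the conjugate variable, set it to zero to obtain the identity $\sum_{k}\mathbf{A}_k\bar{\mathbf{f}}/(\bar{\mathbf{f}}^{\sf H}\mathbf{A}_k\bar{\mathbf{f}}) = \sum_{k}\mathbf{B}_k\bar{\mathbf{f}}/(\bar{\mathbf{f}}^{\sf H}\mathbf{B}_k\bar{\mathbf{f}})$, and regroup it into the generalized eigenvalue form (your log-splitting of the product is algebraically equivalent to the paper's direct use of the Rayleigh-quotient derivative \eqref{eq:derivative_matrix}). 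The only substantive addition is your positive-definiteness argument ($\mathbf{B}_k \succ 0$ via the $\mathbf{I}_{NK}\sigma^2/P$ term and ${\boldsymbol{\Xi}}_k \succeq 0$) justifying the invertibility of $\bar{\bB}(\bar{\mathbf{f}})$, which the paper merely asserts without proof.
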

\begin{proof}
    Please refer to Appendix~\ref{pf:NEPv_precoder}. 
\end{proof}

We note that the stationary condition in \eqref{eq:general_eigen} can be interpreted as a generalized eigenvalue problem.
Here, $\lambda_{\sf BS}(\bar{\bf f})$ is  as an eigenvalue of ${\bar{\bB}}^{-1}(\bar {\bf{f}}) \bar{\bA}(\bar {\bf{f}})$ with $\bar{\bf f}$ as a corresponding eigenvector.
As a result, maximizing the objective function $\cL_{\sf BS} (\bar{\bf f})$ is equivalent to maximizing $\lambda_{\sf BS}(\bar{\bf f})$.
Therefore, it is desirable to find the principal eigenvalue of \eqref{eq:general_eigen} to maximize \eqref{eq:lamb_obj}, which is equivalent to finding the superior local optimal solution of \eqref{eq:AO_precoder}.
Based on \eqref{eq:general_eigen}, we propose the sum SE maximization precoding algorithm by employing the GPI method \cite{choi2019joint, choi2022energy}
As described in Algorithm~\ref{alg:algorithm_1}, we first initialize $\bar \bff^{(0)}$ and update $\bar \bff^{(t)}$ at each iteration with given $\boldsymbol{\Phi}$; the algorithm computes $\bar{\bA} (\bar {\bf{f}}^{(t-1)})$ and $\bar{\bB} (\bar {\bf{f}}^{(t-1)})$ according to \eqref{eq:A_KKT} and \eqref{eq:B_KKT}. 
Then, the algorithm updates $\bar{\bff} \leftarrow \bar{\bB}^{-1} (\bar {\bf{f}}^{(t-1)})\bar{\bA} (\bar {\bf{f}}^{(t-1)}) \bar {\bf{f}}^{(t-1)}$.
The algorithm normalizes the updated precoding vector by $\bar {\bf{f}}^{(t)} = \bar {\bf{f}}^{(t)}/ \left\| \bar {\bf{f}}^{(t)} \right\|$.
We repeat these steps until either $\bar {\bf{f}}^{(t)}$ converges to a tolerance level (e.g. $\|\bar{\bff}_{(t)} - \bar{\bff}_{(t-1)}\| < \varepsilon_1$ for $\varepsilon_1>0$) or the algorithm reaches  $t_{1, \rm max}$.

\begin{algorithm}[t]
    \caption{GPI-Based Precoding Algorithm}
    \label{alg:algorithm_1} 
    {\bf{initialize}}: $\bF^{(0)}$.
    \\
    Set $\bar {\bf{f}}^{(0)} = {\rm vec}(\bF^{(0)})$ and $t= 1$.
    \\
    \While {$\left\|\bar {\bf{f}}^{(t)} - \bar {\bf{f}}^{(t-1)} \right\| > \varepsilon_1$ {\rm or} $t \leq t_{1,\max} $}{
    Build $\bar{\bA}(\bar {\bf{f}}^{(t-1)})$ and $\bar{\bB}  (\bar {\bf{f}}^{(t-1)})$ according to \eqref{eq:A_KKT} and \eqref{eq:B_KKT} for given $\boldsymbol{\Phi}$.
    \\
    Compute $\bar {\bf{f}}^{(t)} = \bar{\bB}^{-1} (\bar {\bf{f}}^{(t-1)}) \bar{\bA} (\bar {\bf{f}}^{(t-1)}) \bar {\bf{f}}^{(t-1)}$. 
    \\
    Normalize $\bar {\bf{f}}^{(t)} = \bar {\bf{f}}^{(t)}/\left\| \bar {\bf{f}}^{(t)}\right\|$.
    \\
    $t \leftarrow t+1$.}
    \Return{\ }{$\bF^{\star} \leftarrow \bar {\bf{f}}^{(t)}$}.
\end{algorithm}

\subsection{Optimizing RIS Phase Shifts ${\boldsymbol{\Phi}}$} \label{subsec:RIS}
In this subsection, we put forth a multi-RIS phase-shift optimization method for given $\bF$.
{\color{black}
We have two primary motivations for developing the RIS optimization method within the GPI framework.
First, the GPI approach offers a significant advantage over traditional optimization methods by enabling us to identify not just any local optimal solution, but a superior local optimum. This capability enhances the overall performance of our RIS system.
Second, we aim to achieve efficient scaling with multiple RISs. 
This is made possible by leveraging the block-diagonal structure of matrices within the GPI method, for example, \eqref{eq:A_KKT} and \eqref{eq:B_KKT}. 
This structural property, which we will discuss later in Remark~\ref{rm:complexity}, allows us to handle multiple RISs  without a prohibitive increase in computational complexity.
}

Unlike the precoding optimization, however, it is required to  handle the unit-modulus constraint on ${\boldsymbol{\Phi}}$.
To this end, we first reformulate \eqref{eq:ins_SE_f} to a quadratic form with respect to $\boldsymbol{\phi}_{\ell}$, providing a more tractable approach in optimizing the RIS phase shifts.
Using the vectorization and introducing a permutation matrix, we convert the second term of the denominator in \eqref{eq:R_lb_a} to
\begin{align}
    &\sum_{i = 1}^{K} {\bf{f}}^{\sf{H}}_i\left(\sum_{\ell=1}^{L}\bE_{k,\ell} {\boldsymbol {\phi}}_{\ell} {\boldsymbol {\phi}}_{\ell}^{\sf H} \bE_{k,\ell}^{\sf H}\right) {\bf{f}}_i
    \nonumber\\
    &\!=\! \sum_{\ell = 1}^{L}\!\sum_{i=1}^{K}
    {\boldsymbol {\phi}}_{\ell}^{\sf H} \left({\bff_i}^{\sf T}\!\!\otimes \bI_M \right)\!{\rm vec}\left(\bE^{\sf T}_{k,\ell}\right)^{*}\!\!\left({\rm vec}\left(\bE^{\sf T}_{k,\ell}\right)^{*}\right)^{\sf H}\!\! \left({\bff_i}^{\sf T}\!\!\otimes \bI_M \right)^{\sf H}\!\!\!{\boldsymbol {\phi}}_{\ell}
    \nonumber\\
    \label{eq:qd_phi}
    &\!=\! \sum_{\ell = 1}^{L}\!\sum_{i=1}^{K}
    {\boldsymbol {\phi}}_{\ell}^{\sf H} \left({\bff_i}^{\sf T} \otimes \bI_M \right)\!\bP \be_{k,\ell}^{*}\!\left(\bP \be_{k,\ell}^{*}\right)^{\sf H}\!\! \left({\bff_i}^{\sf T} \otimes \bI_M \right)^{\sf H}\!\!\!{\boldsymbol {\phi}}_{\ell},
\end{align}
where $\bP$ is a permutation matrix satisfying ${\rm vec}(\bE^{\sf T}) = \bP {\rm vec}(\bE)$.
By applying \eqref{eq:qd_phi} to the similar process of \eqref{eq:R_lb_a}-\eqref{eq:R_lb_c}, the lower bound of instantaneous SE can be reformulated as {\color{black}$\tilde{R}^{\sf ins}_{k}({\boldsymbol{\Phi}})=\bar{R}^{\sf ins}_{k}({\bf F}, {\boldsymbol{\Phi}})$, where}
\begin{align}
    \label{eq:ins_SE_phi}
    &\tilde{R}^{\sf ins}_k({\boldsymbol{\Phi}}) = 
    \\ \nonumber
    & \log_2 \!\left(\!1 \!+\! \frac{\left|\left(\sum_{\ell=1}^{L}\hat{\bH}^{\sf r}_{k,\ell}{\boldsymbol {\phi}}_{\ell}\right)^{\sf H} {\bf{f}}_k \right|^2}
    {\sum_{i\neq k}^{K}\!\left|\left(\sum_{\ell=1}^{L}\!\hat{\bH}^{\sf r}_{k,\ell}{\boldsymbol {\phi}}_{\ell}\!\right)^{\sf H}\!\!\!{\bf{f}}_i \right|^2 
    \!\!+\! 
    \sum_{\ell = 1}^{L}
    {\boldsymbol {\phi}}_{\ell}^{\sf H}\boldsymbol{\Theta}_{k,\ell} {\boldsymbol {\phi}}_{\ell}
    \!+\! \frac{\sigma^2}{P}}\!\right),
\end{align}
and $\boldsymbol{\Theta}_{k,\ell} = \sum_{i=1}^{K} \left({\bff_i}^{\sf T} \otimes \bI_M \right) \bP\left(\bR^{\sf e}_{k,\ell}\right)^{\sf *} \bP^{\sf T} \left({\bff_i}^{\sf T} \otimes \bI_M \right)^{\sf H}$.
With given $\bF$, our problem in \eqref{eq:main_problem} is rewritten as
\begin{align} 
    \label{eq:main_problem_2}
    \mathop{{\text{maximize}}}_{{\boldsymbol{\Phi}}}& \;\; \sum_{k = 1}^{K} \tilde{R}^{\sf ins}_{k}({\boldsymbol{\Phi}})
    \\
    \label{eq:unitmodulus_constraint_2}
    {\text{subject to}} & \;\; |\phi_{\ell,m}| = 1,\; \forall \ell \in \CMcal{L}, \forall m \in \CMcal{M}.
\end{align}

Now, to resolve the challenge of the unit-modulus constraint, we introduce a regularization approach.
We first stack all individual RIS elements as a vector form as 
\begin{align}
    \bar{\boldsymbol{\phi}} \!=\! \left[{\boldsymbol{\phi}}_1^{\sf T}, \boldsymbol{\phi}_2^{\sf T}, \cdots, \boldsymbol{\phi}_L^{\sf T} \right]^{\sf T} \!\!\!=\!  \left[{\phi}_{1,1}, {\phi}_{1,2}, \cdots, {\phi}_{L,M}\right]^{\sf T}\!\! \in \bbC^{LM}\!\!.
\end{align}
We relax the unit-modulus constraint considering the difference between maximum and minimum values among all RIS phase shifts of $\bar{\boldsymbol{\phi}}$.
Then, the regularized optimization problem can be formulated without the unit-modulus constraint as
\begin{align}
    \label{eq:AO_RIS}
    \mathop{{\text{maximize}}}_{{\boldsymbol{\Phi}}, \mu}& \frac{1}{R_{\Sigma}}\!\sum_{k = 1}^{K}\!\tilde{R}^{\sf ins}_{k}({\boldsymbol{\Phi}}) \!-\! \frac{\mu}{\tau}\left(\max_{\substack{\ell \in \CMcal{L}\\ m\in \CMcal{M}}} |{\phi}_{\ell,m}|^2 \!-\!\min_{\substack{\ell \in \CMcal{L}\\ m\in \CMcal{M}}} |{\phi}_{\ell,m}|^2 \right),\!\!
\end{align}
where $\mu$ denotes a parameter of regularization, $\tau$ is a pre-defined normalization factor for the RIS phase shifts, and $R_{\Sigma}$ is a pre-defined normalization factor for the sum SE obtained by any existing state-of-the-art precoder with randomly generated $\boldsymbol{\Phi}$.
We remark that the normalization factors are introduced to make the SE and regularization term comparable in scale, thereby reducing the effective range of $\mu$.
In \eqref{eq:AO_RIS}, the second term, commonly referred to as the penalty term, serves to enforce the unit-modulus constraint by minimizing the difference between  maximum and minimum elements of $\bar{\boldsymbol{\phi}}$. 
The regularization parameter $\mu$ controls the degree of adherence to the unit-modulus constraint, ensuring that the amplitudes of relaxed RIS phase shifts remain as homogeneous as possible before projecting them into a feasible solution set, i.e., a complex unit circle.

We now tackle \eqref{eq:AO_RIS} with respect to ${\bf w}$ for given $\mu$. 
To transform \eqref{eq:AO_RIS} into a more tractable form, we first normalize the RIS phase shifts vector as
\begin{align}
    {\bw}=\frac{1}{\sqrt{L M}}\bar{\boldsymbol{\phi}} = [w_1, w_2, \cdots, w_{LM}]^{\sf T} \in \bbC^{LM}.
\end{align}
With some abuse of notation, we then rewrite the objective function in \eqref{eq:AO_RIS} as
\begin{align}
    \label{eq:L_2}
    \CMcal{L}_{\sf RIS}(\bw) \!=\! \frac{1}{R_{\Sigma}}\!\sum_{k = 1}^{K}\! \tilde{R}^{\sf ins}_{k}(\bw) \!-\! \frac{\mu}{\tau} \left(\max_{i = 1,\ldots, LM}\!|w_i|^2 \!-\! \min_{i = 1,\ldots,LM}\!|w_i|^2 \right)\!.
\end{align}
{\color{black}We set $\tau = (LM)^{-1}$ in this problem.}
Let us now introduce a $LM \times LM$ diagonal matrix as 
\begin{align}
    \label{eq:X_mat}
    \bX_i = {\rm diag}\big(0,\ldots, \underbrace{1}_{i{\rm th\;term}}, \ldots, 0\big),\; i=1,\ldots, LM,
\end{align}
where the non-zero entry appears in $i$th diagonal position.
By leveraging \eqref{eq:X_mat}, we can rewrite the power of the normalized RIS phase shifts as a quadratic form as $|w_i|^2 = \bw^{\sf H} \bX_i \bw$.
To transform \eqref{eq:L_2} into the GPI-friendly form for ${\bf w}$, we assume $\|\bw\|=1$ {\color{black} which is naturally true when the unit-modulus constraint is met.}
As a result, we can reformulate the objective function in \eqref{eq:AO_RIS} as 
\begin{align}
    \label{eq:L_2_Rayleigh}
    \CMcal{L}_{\sf RIS}&(\bw) =
    \frac{1}{R_{\Sigma}}\sum_{k = 1}^{K} \log_2 \left(\frac{{\bw}^{\sf H}\bC_k {\bw}}{{\bw}^{\sf H}\bD_k {\bw}}\right) 
    \nonumber 
    \\
    & - \frac{\mu}{\tau} \left(\max_{i= 1,\ldots, LM} \left\{{\bw}^{\sf H} \bX_{i}{\bw}\right\} - \min_{i= 1,\ldots, LM} \left\{{\bw}^{\sf H}\bX_{i}{\bw}\right\}\right),
\end{align}
where
\begin{align}
    \label{eq:C_k}
    &\bC_k \!\!=\! L M \mathrm{blkdiag}\!\left(\Upsilon_{k,1} \!+\! \boldsymbol{\Theta}_{k,1}, \!\cdots\!, \Upsilon_{k,L} \!+\! \boldsymbol{\Theta}_{k,L}\right) \!+\! \frac{\sigma^2}{P} \bI_{LM},
    \\
    \label{eq:D_k}
    &\bD_k \!\!=\! L M \mathrm{blkdiag}\!\left(\bar{\Upsilon}_{k,1} \!+\! \boldsymbol{\Theta}_{k,1}, \!\cdots\!, \bar{\Upsilon}_{k,L} \!+\! \boldsymbol{\Theta}_{k,L}\right) \!+\! \frac{\sigma^2}{P} \bI_{LM},
    \\
    &\Upsilon_{k,\ell} = {\hat{\bH}_{k,\ell}^{\sf r H}} \bQ \hat{\bH}^{\sf r}_{k,\ell},\; \bar{\Upsilon}_{k,\ell} = {\hat{\bH}_{k,\ell}^{\sf r H}} \bar{\bQ}_k \hat{\bH}^{\sf r}_{k,\ell},
    \\
    &\bQ = \sum_{i=1}^K \bff_i \bff_i^{\sf H},\; \bar{\bQ}_k = \bQ - \bff_k \bff_k^{\sf H}.
\end{align}
In \eqref{eq:L_2_Rayleigh}, we need to approximate the non-smooth functions such as  $\rm{max}(\cdot)$ and $\rm{min}(\cdot)$ for finding the optimality condition. 
To this end, we adopt a LogSumExp approach \cite{shen2010dual}:
\begin{align}
    \label{eq:logsumexp1}
    &\min_{i = 1,...,J}\{x_i\}  \approx -\alpha \ln\left(\sum_{i = 1}^{J} \exp\left( \frac{x_i}{-\alpha}  \right)\right),
    \\
    \label{eq:logsumexp2}
    &\max_{i = 1,...,J}\{x_i\}  \approx \alpha \ln \left(\sum_{i = 1}^{J} \exp\left( \frac{x_i}{\alpha} \right)\right),
\end{align}
where the approximation becomes tight as $\alpha \rightarrow +0$ for both cases. 
Using the LogSumExp, \eqref{eq:L_2_Rayleigh} is approximated as
\begin{align}
    \label{eq:Rayleigh_2}
    \cL_{\sf RIS}({\bw}) \approx \Tilde{\cL}_{\sf RIS}({\bw}) = \log_2 \lambda_{\sf RIS} (\bw),
\end{align}
where
\begin{align}
    \nonumber
    &\lambda_{\sf RIS} (\bw) \!=\! \prod_{k=1}^{K} \left(\frac{{\bw}^{\sf H}\bC_k {\bw}}{{\bw}^{\sf H}\bD_k {\bw}}\right)^{\frac{1}{R_{\Sigma}}} \!\left(\sum_{i=1}^{LM} \exp\left(\frac{{\bw}^{\sf H}\bX_{i}{\bw}}{\alpha_1}\right)\right)^{-\frac{\alpha_1 \mu \ln 2}{\tau}} 
    \\ \label{eq:lambda_RIS}
    &\qquad\qquad \times \left(\sum_{i=1}^{LM} \exp\left({\frac{{\bw}^{\sf H}\bX_{i}{\bw}}{-\alpha_2}}\right)\right)^{-\frac{\alpha_2 \mu \ln 2}{\tau}}.
\end{align}
\begin{figure}[!t]
    \centering
    $\begin{array}{c}
    {\resizebox{0.75\columnwidth}{!}{\includegraphics{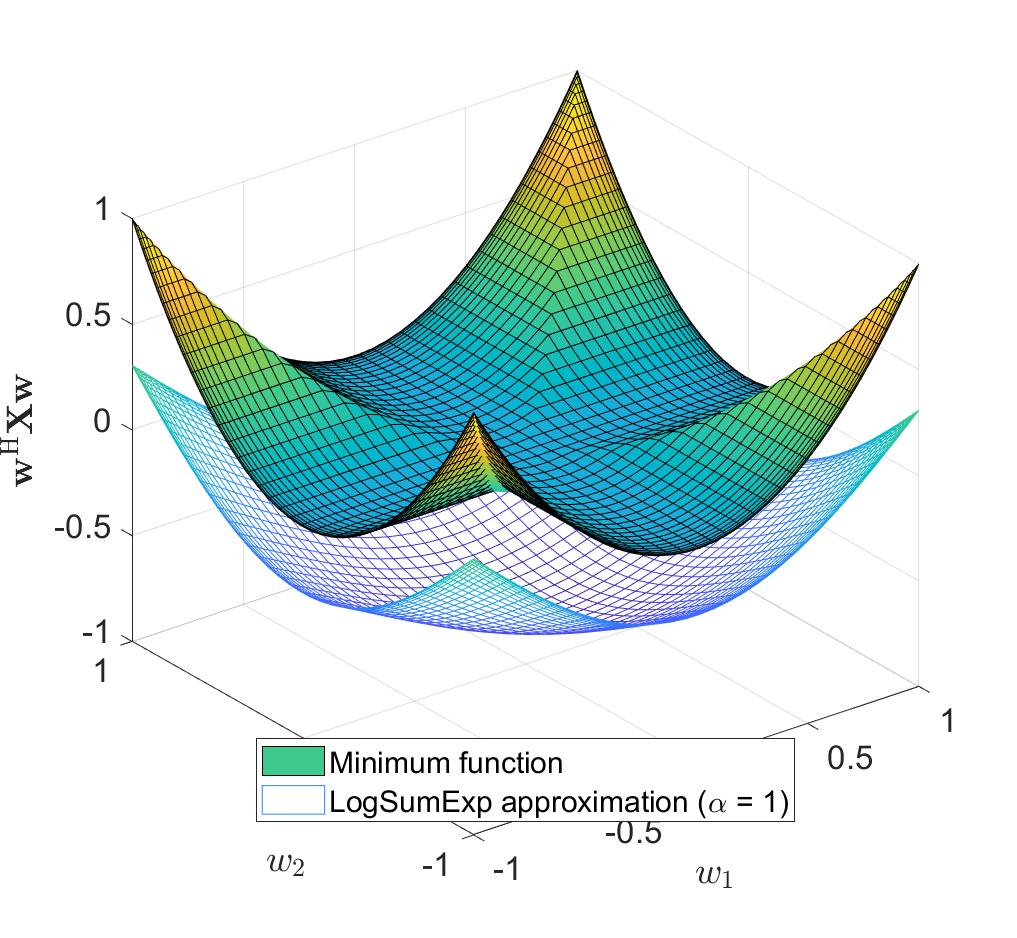}}
    } \\ \mbox{\small (a) $\min (\cdot)$ vs. LogSumExp with $\alpha = 1$}
    \end{array}$
    \vspace{-0.1cm}
    $\begin{array}{c}
    {\resizebox{0.75\columnwidth}{!}{\includegraphics{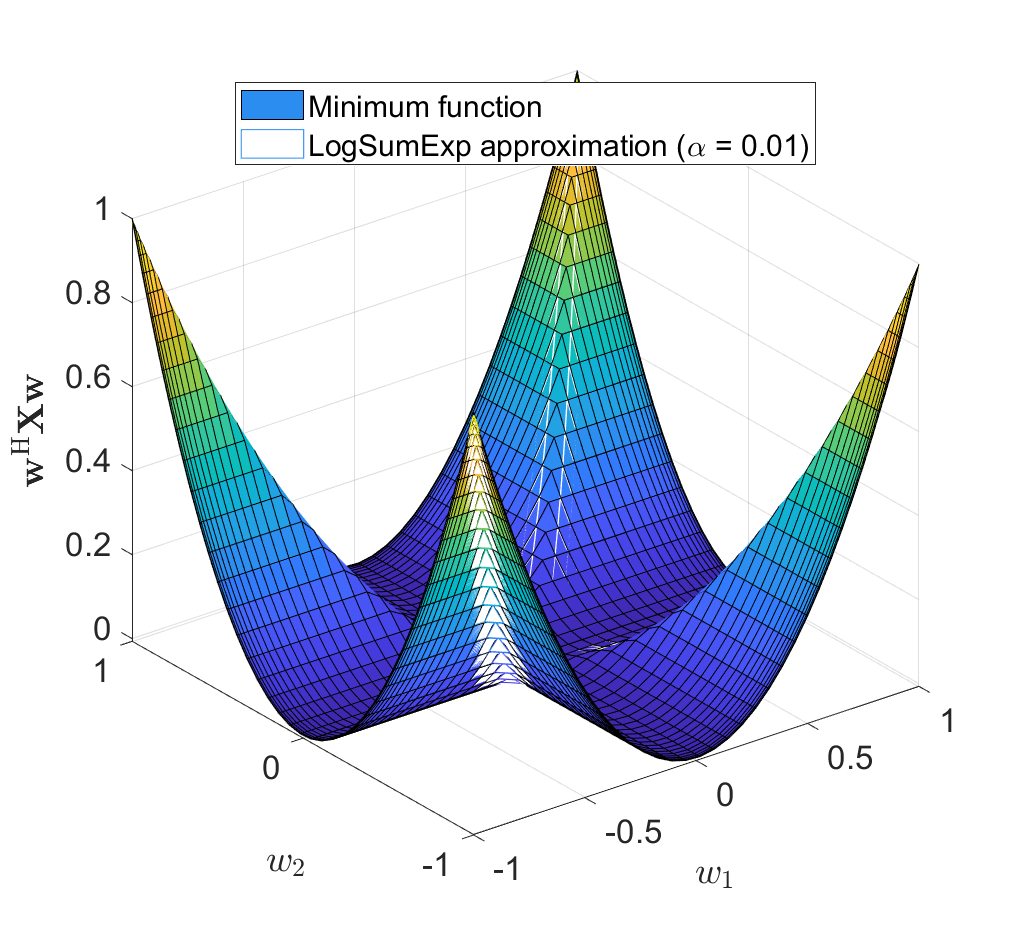}}
    }\\  \mbox{\small (b) $\min (\cdot)$ vs. LogSumExp with $\alpha = 0.01$}
    \end{array}$
    \caption{\color{black}An illustration of the comparison between the true minimum function and the LogSumExp approximation.}
    \label{fig:LSE}
\end{figure}

{\color{black}
To further illustrate the behavior of the non-smooth  function used in the approximation, we present Fig.~2, which compares LogSumExp with the exact minimum function.
This comparison demonstrates the trade-off in the LogSumExp approximation: a lower value of $\alpha$ tightens the approximation accuracy yet induces steep gradients that may compromise numerical stability, whereas a higher $\alpha$ ensures smoother gradients for stability at the cost of accuracy.}

{\color{black}
\!} {\color{black}
We further analyze the convergence properties of our GPI-based algorithm from a \textit{a fixed-point iteration perspective} \cite{nesterov2005smooth} to rigorously explain why an extremely small value of $\alpha$ leads to instability.
Our GPI-based algorithm updates the solution iteratively using a mapping function $\CMcal{T}: \bbC^{LM} \to \bbC^{LM}$. 
Since the update rule takes the form $\bw^{(t+1)} = \CMcal{T}(\bw^{(t)})$, the algorithm constitutes a fixed-point iteration.
For the algorithm to converge stably to a fixed point $\bw^{\star}$, the mapping $\CMcal{T}$ must be a contraction mapping in the neighborhood of $\bw^{\star}$. 
A sufficient condition for local convergence is that the spectral radius of the Jacobian matrix of $\CMcal{T}$ must be less than 1:
\begin{align}
    \rho({\boldsymbol{\CMcal{J}}}_{\CMcal{T}}(\bw^{\star})) < 1,
\end{align}
where $\boldsymbol{\CMcal{J}}_{\CMcal{T}}$ indicates the Jacobian of $\CMcal{T}$ defined as $\boldsymbol{\CMcal{J}}_{\CMcal{T}}(\bw) = \nabla \CMcal{T} (\bw)$ and $\rho (\cdot)$ indicates the spectral radius.

Although our GPI update is not a simple gradient descent step, we can draw a direct analogy to the standard gradient-based fixed-point iteration to understand the impact of curvature, i.e., 
\begin{align}
    \bw^{(t+1)} = \bw^{(t)} + \eta \nabla \tilde{\cL}_{\sf RIS}(\bw),
\end{align}
where $\eta$ is a step size.
In this case, the Jacobian is governed by the Hessian of the objective function:
\begin{align}
    \boldsymbol{\CMcal{J}}_{\CMcal{T}}(\bw) \approx \bI + \eta \nabla^2 \tilde{\cL}_{\sf RIS}(\bw).
\end{align}
Stability requires that the eigenvalues of the Hessian, $\lambda_i\left(\nabla^2 \tilde{\cL}_{\sf RIS}(\bw)\right)$, are not excessively large. 
However, the Hessian of LogSumExp is inversely proportional to the smoothing parameter $\alpha$ where $\alpha = \alpha_1 = \alpha_2$:
\begin{align}
    \big\|\nabla^2 \text{LogSumExp}(\bw; \alpha)\big\| \propto \frac{1}{\alpha}.
\end{align}
This implies that as $\alpha \to 0$, the curvature of the objective function becomes arbitrarily large, i.e., forming a sharp edge.
Consequently, an extremely small $\alpha$ induces a sharp curvature in the optimization landscape. In such regions, the Jacobian $\boldsymbol{\CMcal{J}}_{\CMcal{T}}(\bw)$ typically exhibits large eigenvalues, causing the spectral radius to exceed 1 ($\rho(\boldsymbol{\CMcal{J}}_{\CMcal{T}}) > 1$).
Under these conditions, the fixed-point iteration becomes locally expansive rather than contractive, leading to oscillations or divergence.
Therefore, there exists a fundamental trade-off: while a smaller $\alpha$ reduces the approximation error, it increases the numerical sensitivity of the problem, compromising the stability of the iterative algorithm.}

Finally, the regularized optimization problem in \eqref{eq:AO_RIS} is reformulated and approximated for given $\mu$ as
\begin{align} 
    \label{eq:AO_RIS_2}
    \mathop{{\text{maximize}}}_{\bw}&\; \Tilde{\cL}_{\sf RIS}({\bw}).
\end{align}
Then, similar to the precoding optimization, we derive {Lemma~\ref{lem:NEP_RIS}} for the approximated problem in \eqref{eq:AO_RIS_2}  to find the stationary points of \eqref{eq:lambda_RIS}.
\begin{lemma}
    \label{lem:NEP_RIS}
The stationary condition of \eqref{eq:AO_RIS_2} is also satisfied if the following holds:
\begin{align}
    \label{eq:general_eigen_2}
    \bar{\bD}^{-1}(\bw) \bar{\bC}(\bw){\bw} = \lambda_{\sf RIS}(\bw)\bw,
\end{align}
where
\begin{align}
    \label{eq:C_KKT}
    &\bar{\bC}(\bw) = \lambda_{{\sf RIS},{\sf num}}(\bw)\times
    \nonumber\\ 
    &\quad \left[\frac{1}{R_{\Sigma}\ln{2}}\sum_{k = 1}^{K} \left(\frac{\bC_k}{{\bw}^{\sf H}\bC_k {\bw}}\right) + \frac{\mu}{\tau} \frac{\sum_{i=1}^{LM} {\bX}_i e^{\frac{{\bw}^{\sf H}{\bX}_{i}{\bw}}{-\alpha_2}}}{\sum_{i=1}^{LM} e^{\frac{{\bw}^{\sf H}{\bX}_{i}{\bw}}{-\alpha_2}}}\right], 
    \\
    \label{eq:D_KKT}
    &\bar{\bD}(\bw) = \lambda_{{\sf RIS},{\sf den}}(\bw) \times
    \nonumber\\ 
    &\quad \left[\frac{1}{R_{\Sigma}\ln{2}}\sum_{k = 1}^{K} \left(\frac{\bD_k}{{\bw}^{\sf H}\bD_k {\bw}}\right) + \frac{\mu}{\tau} \frac{\sum_{i=1}^{LM} {\bX}_i e^{\frac{{\bw}^{\sf H}{\bX}_{i}{\bw}}{\alpha_1}}}{\sum_{i=1}^{LM} e^{\frac{{\bw}^{\sf H}{\bX}_{i}{\bw}}{\alpha_1}}}\right].
\end{align}
\end{lemma}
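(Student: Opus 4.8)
The plan is to mirror the argument used for Lemma~\ref{lem:NEP} (detailed in Appendix~\ref{pf:NEPv_precoder}), now applied to the regularized and smoothed objective $\tilde{\cL}_{\sf RIS}(\bw)=\log_2\lambda_{\sf RIS}(\bw)$ from \eqref{eq:Rayleigh_2}. Since $\log_2(\cdot)$ is strictly increasing, the stationary points of $\tilde{\cL}_{\sf RIS}$ coincide with those of $\ln\lambda_{\sf RIS}(\bw)$, so I would work with $\ln\lambda_{\sf RIS}$ and impose the first-order condition $\nabla_{\bw^{*}}\ln\lambda_{\sf RIS}(\bw)=\mathbf{0}$ under the normalization $\|\bw\|=1$ assumed in \eqref{eq:L_2_Rayleigh}. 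Exactly as in Lemma~\ref{lem:NEP}, the scalar prefactors $\lambda_{{\sf RIS},{\sf num}}(\bw)$ and $\lambda_{{\sf RIS},{\sf den}}(\bw)$ are treated as free functions subject only to $\lambda_{\sf RIS}(\bw)=\lambda_{{\sf RIS},{\sf num}}(\bw)/\lambda_{{\sf RIS},{\sf den}}(\bw)$; the objective is then to show that the vanishing-gradient condition can be rearranged into the generalized eigenvalue relation \eqref{eq:general_eigen_2}.

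The core of the work is a term-by-term Wirtinger-calculus gradient of $\ln\lambda_{\sf RIS}$. For the Rayleigh-quotient part, using $\partial(\bw^{\sf H}\bA\bw)/\partial\bw^{*}=\bA\bw$ for Hermitian $\bA$, each summand contributes $\tfrac{1}{R_\Sigma}\big(\bC_k\bw/(\bw^{\sf H}\bC_k\bw)-\bD_k\bw/(\bw^{\sf H}\bD_k\bw)\big)$. For the two LogSumExp penalties imported from \eqref{eq:logsumexp1}-\eqref{eq:logsumexp2}, differentiating $\ln\!\big(\sum_i e^{\alpha_1\bw^{\sf H}\bX_i\bw}\big)$ and $\ln\!\big(\sum_i e^{-\bw^{\sf H}\bX_i\bw/\alpha_2}\big)$ produces softmax-weighted combinations of the $\bX_i\bw$. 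The key bookkeeping point is that the factor $\alpha_1$ generated by the chain rule cancels the $1/\alpha_1$ prefactor, and the factor $-1/\alpha_2$ cancels the $\alpha_2$ prefactor while flipping the sign, leaving precisely the clean weighted averages that appear inside the brackets of $\bar{\bC}(\bw)$ and $\bar{\bD}(\bw)$ in \eqref{eq:C_KKT}-\eqref{eq:D_KKT}.

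Collecting the pieces and tracking the $\ln 2$ arising from $\log_2=\ln/\ln 2$, I expect to obtain the identity $\nabla_{\bw^{*}}\ln\lambda_{\sf RIS}(\bw)=\ln 2\,\big(\bar{\bC}(\bw)\bw/\lambda_{{\sf RIS},{\sf num}}(\bw)-\bar{\bD}(\bw)\bw/\lambda_{{\sf RIS},{\sf den}}(\bw)\big)$, where the positive-signed contributions (the $\bC_k$ terms and the $-\alpha_2$ softmax from the $\min$ penalty) assemble into $\bar{\bC}$, and the negative-signed contributions (the $\bD_k$ terms and the $\alpha_1$ softmax from the $\max$ penalty) assemble into $\bar{\bD}$, matching the definitions exactly. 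Setting the gradient to zero gives $\bar{\bC}(\bw)\bw/\lambda_{{\sf RIS},{\sf num}}=\bar{\bD}(\bw)\bw/\lambda_{{\sf RIS},{\sf den}}$; since $\bar{\bD}(\bw)$ is positive definite (the $\frac{\sigma^2}{P}\bI_{LM}$ in $\bD_k$ plus PSD softmax terms), it is invertible, and left-multiplying by $\bar{\bD}^{-1}(\bw)$ together with $\lambda_{\sf RIS}=\lambda_{{\sf RIS},{\sf num}}/\lambda_{{\sf RIS},{\sf den}}$ yields $\bar{\bD}^{-1}(\bw)\bar{\bC}(\bw)\bw=\lambda_{\sf RIS}(\bw)\bw$, which is \eqref{eq:general_eigen_2}.

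I expect the main obstacle to be twofold. First is the careful $\alpha_1,\alpha_2$ cancellation in the LogSumExp gradients and the correct placement of the $\ln 2$ factors, since a single sign or scaling slip would misassign a penalty term between $\bar{\bC}$ and $\bar{\bD}$. Second, and more conceptually, the regularization destroys the degree-zero homogeneity that makes the analogous step in Lemma~\ref{lem:NEP} automatic: $\ln\lambda_{\sf RIS}$ is no longer scale invariant, so its gradient need not be orthogonal to $\bw$, and a naive Lagrangian treatment of $\|\bw\|=1$ would leave a multiplier term $\nu\bw$. I would resolve this by reading \eqref{eq:general_eigen_2} as a self-consistent fixed-point relation — a $\bw$ that is an eigenvector of $\bar{\bD}^{-1}(\bw)\bar{\bC}(\bw)$ with eigenvalue \emph{exactly} $\lambda_{\sf RIS}(\bw)$ — which is algebraically equivalent to $\nabla_{\bw^{*}}\ln\lambda_{\sf RIS}=\mathbf{0}$ and hence forces $\nu=0$, so the stated relation is a valid sufficient condition for stationarity and coincides with the fixed point of the normalized GPI map, exactly as in \cite{choi2019joint}.
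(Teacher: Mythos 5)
Your proposal is correct and follows essentially the same route as the paper's proof: take the Wirtinger derivative of the LogSumExp-smoothed objective (with exactly the $\alpha_1$ and $\alpha_2$ chain-rule cancellations you describe), set it to zero, and rearrange the resulting identity into \eqref{eq:general_eigen_2} using the invertibility of $\bar{\bD}(\bw)$. Your two refinements --- the positive-definiteness argument for $\bar{\bD}(\bw)$ and the observation that \eqref{eq:general_eigen_2} is algebraically equivalent to unconstrained stationarity, so no norm-constraint multiplier survives --- are correct and simply make explicit what the paper leaves implicit.
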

\begin{proof}
    Please refer to Appendix~\ref{pf:NEPv_RIS}. 
\end{proof}
As discussed in Section~\ref{subsec:precoding}, this stationary condition can be interpreted as the generalized eigenvalue problem.
Thus, the problem is transformed for finding its principal eigenvector of \eqref{eq:general_eigen_2}.
In this regard, such eigenvector can also be found by using the GPI method to solve  \eqref{eq:general_eigen_2}.

We describe our regularized GPI-based RIS phase-shift optimization method in Algorithm~\ref{alg:algorithm_2}.
The algorithm first initializes $\bar{\boldsymbol{\phi}}$.
To optimize $\bar{\boldsymbol{\phi}}$ with the GPI method, the algorithm builds $\bar{\bC} ({\bw}^{(t-1)})$ and $\bar{\bD}({\bw}^{(t-1)})$ for given $\bF$,  $\mu$, $R_{\Sigma}$, and  $\tau$.
From the stationary condition in \eqref{eq:general_eigen_2}, the algorithm updates ${\bw}^{(t)} $ by calculating ${\bw}^{(t)} = \bar{\bD}^{-1} ({\bw}^{(t-1)}) \bar{\bC} ({\bw}^{(t-1)}){\bw}^{(t-1)}$ and normalizing ${\bw}^{(t)} = {\bw}^{(t)}/\left\| {\bw}^{(t)}\right\|$.
We repeat these steps until either ${\bw}$ converges to a tolerance level or  $t=t_{2, \rm max}$.
Lastly, due to the unit-modulus constraint, the optimized $\bw^{\star}$ is projected onto the feasible solution set by computing $\bar{\boldsymbol{\phi}} = e^{j {\rm arg}(\bw^{\star})}$.


\begin{algorithm}[t]
    \caption{{Regularized GPI-Based RIS Phase-Shift Optimization Algorithm}}
    \label{alg:algorithm_2} 
    {\bf{initialize}}: ${\bar{\boldsymbol{\phi}}}^{(0)} \leftarrow \boldsymbol{\Phi}^{(0)}$.
    \\
    Set $\bw^{(0)} = \frac{1}{\sqrt{LM}}{\bar{\boldsymbol{\phi}}}^{(0)}$ and $t= 1$.
    \\
    \While {$\left\|{\bw}^{(t)} - {\bw}^{(t-1)} \right\| > \varepsilon_2$ {\rm or} $t \leq t_{2, \max} $}{
    Build $\bar{\bC} ({\bw}^{(t-1)}\!)$ and $\bar{\bD} ({\bw}^{(t-1)}\!)$  with given $\bF$ and $\mu$.
    \\
    Compute ${\bw}^{(t)} \!=\! \bar{\bD}^{-1} ({\bw}^{(t-1)}) \bar{\bC} ({\bw}^{(t-1)}) {\bw}^{(t-1)}$. 
    \\
    Normalize ${\bw}^{(t)} = {\bw}^{(t)}/\left\| {\bw}^{(t)}\right\|$.
    \\
    $t \leftarrow t+1$.}
    {\color{black}
    $\bw^{\star} \leftarrow \bw^{(t)}$.}
    \\
    ${\bar{\boldsymbol{\phi}}}^{\star}  \leftarrow e^{j {{\rm arg}(\bw^{\star})}}$.
    \\
    \Return{\ }{${{\boldsymbol{\Phi}}}^\star \leftarrow \bar{\boldsymbol{\phi}}^{\star}$}.
\end{algorithm}

\subsection{Joint Optimization Algorithm} \label{sec:proposed_AA}
In this subsection, we propose the alternating algorithm as described in Algorithm~\ref{alg:algorithm_3} for joint optimization of the precoder and RIS phase shifts by putting together the results in Section~\ref{subsec:precoding} and \ref{subsec:RIS}.
At the beginning of Algorithm~\ref{alg:algorithm_3}, the algorithm initializes the precoder and the RIS phase shifts as $\bF^{(0)}$ and $\boldsymbol{\Phi}^{(0)}$ while setting the outer iteration count $i=1$.
Then, the algorithm computes $R_{\Sigma}$ for given $\bF^{(0)}$ to normalize the penalty term in \eqref{eq:AO_RIS}.

For finding an optimal value of $\mu$, we adopt a line search method.
With the line search, the algorithm identifies the optimal value of $\mu$ within the range of $\mu^{(i)} \in [\mu_{\rm min}, \mu_{\rm max}]$ with $T_{\mu}$ linearly spaced points and the increasing step $\Delta_{\mu}$, i.e., $\mu_{\rm max} = \mu_{\rm min} + \Delta_{\mu}(T_{\mu} - 1)$, which is updated in the outer loop.
In the inner loop of the algorithm, the precoder is optimized by Algorithm~\ref{alg:algorithm_1} for given $\boldsymbol{\Phi}$.
Subsequently, using Algorithm~\ref{alg:algorithm_2}, the RIS phase shifts matrix is optimized with updated $\bF$.
We repeat these steps until either it converges or  $t = t_{3, \rm  max}$ in the alternating manner.
For the convergence level, we compare the current solution with the previous solution as  $\frac{|f(\bF^{(t)}, \boldsymbol{\Phi}^{(t)}) - f(\bF^{(t-1)}, \boldsymbol{\Phi}^{(t-1)})|}{f(\bF^{(t-1)}, \boldsymbol{\Phi}^{(t-1)})}$ where  $f(\bF, {\boldsymbol{\Phi}})$ is the objective function of the original problem in \eqref{eq:main_problem}.
We use this convergence level with a tolerance threshold $\varepsilon_3>0$.
To identify the optimal value of $\mu$, the algorithm computes $f\left(\Tilde{\bF}^{(i)}, \Tilde{\boldsymbol{\Phi}}^{(i)}\right)$ at $\mu^{(i)}$. 
Consequently, the algorithm decides ${\boldsymbol{\Phi}}^{\star}$ and $\bF^{\star}$ that maximizes $f\left(\Tilde{\bF}^{(i)}, \Tilde{\boldsymbol{\Phi}}^{(i)}\right)$ for $i = 1, \cdots, N_{\mu}$.


\begin{algorithm}[t]
\caption{GPI-Based Precoding and RIS Phase-Shift Optimization Algorithm (GPI-PRIS)} 
\label{alg:algorithm_3} 
{\bf{initialize}}: $\bF^{(0)}, {\boldsymbol{\Phi}}^{(0)}$, and $\mu^{(0)}$
\\
Set iteration count for outer loop $i= 1$.
\\
Compute  $\tau= (LM)^{-1}$  and $R_{\Sigma}$ with  ($\bF^{(0)}$,  ${{\boldsymbol{\Phi}}}^{(0)}$). 
\\
\While{$i \leq {\color{black}T_{\mu}}$}{
Set $\mu^{(i)} \leftarrow \mu^{(i-1)} + \Delta_{\mu} \in [\mu_{\rm min}, \mu_{\rm max}]$.
\\
Set iteration count for inner loop $t=1$.
\\
\While {$\frac{|f(\bF^{(t)}\!,\! \boldsymbol{\Phi}^{(t)}) - f(\bF^{(t-1)}\!,\! \boldsymbol{\Phi}^{(t-1)}\!)|}{f(\bF^{(t-1)},\boldsymbol{\Phi}^{(t-1)}\!)} \!>\!\varepsilon_3$ {\rm or}  $t \!\leq\! t_{3, \rm max}$}
{
${\bF}^{(t)} \leftarrow \text{Algorithm~\ref{alg:algorithm_1}}\left(\bF^{(t-1)};{\boldsymbol{\Phi}}^{(t-1)}\right)$.
\\
${\boldsymbol{\Phi}}^{(t)} \leftarrow \text{Algorithm~\ref{alg:algorithm_2}}\left({\boldsymbol{\Phi}}^{(\!t-1\!)}; \bF^{(t)}, \mu^{(i)}\right).$
\\
$t \leftarrow t + 1$.
}
$\Tilde{\bF}^{(i)} \leftarrow \bF^{(t)}$ and $\Tilde{\boldsymbol{\Phi}}^{(i)} \leftarrow \boldsymbol{\Phi}^{(t)}$.
\\
Compute $f\left(\Tilde{\bF}^{(i)}, \Tilde{\boldsymbol{\Phi}}^{(i)}\right)$.
\\
$i \leftarrow i + 1$.
}
Select $i^{\star}=\argmax_{i =1,\cdots, {\color{black}T_{\mu}}} {f\left(\Tilde{\bF}^{(i)}, \Tilde{\boldsymbol{\Phi}}^{(i)}\right)}$.
\\
\Return{\ }$\bF^\star \leftarrow \Tilde{\bF}^{(i^{\star})}$ and ${{\boldsymbol{\Phi}}}^\star \leftarrow \Tilde{\boldsymbol{\Phi}}^{(i^{\star})}$.
\end{algorithm}

\subsection{Complexity Analysis} \label{sec:complexity}
Now, we analyze the complexity of the proposed algorithms.
The complexity of Algorithm~\ref{alg:algorithm_1} is dominated by the inversion in $\bar{\bB}^{-1} (\bar {\bf{f}})$.
Since $\bar{\bB} (\bar {\bf{f}})$ is a block-diagonal and symmetric matrix, we need  $\CMcal{O}(KN^3)$ instead of $\CMcal{O}(K^3 N^3)$ to obtain the inverse of $K$ sub-matrices in $\bar{\bB}(\bar {\bf{f}})$.
Hence, the total complexity of Algorithm~\ref{alg:algorithm_1} is $\CMcal{O}(T_1 KN^3)$ where $T_1$ is the number of its iterations.
We note that this is substantially lower compared to the existing precoding schemes.
{\color{black}For instance, weighted mean square error (WMMSE) \cite{shi2011iteratively} requires $\CMcal{O}(KN^3)$ complexity, which is the same as that of GPI-based precoding. 
However, GPI-based precoding accounts for the effects of channel errors by incorporating error covariance.
When considering channel uncertainty, WMMSE with sample average approximation  \cite{joudeh2016sum} requires $\CMcal{O}((KN)^{3.5})$ complexity based on a quadratically constrained quadratic programming problem.}

\begin{figure}[!t]    
    {\centerline{\resizebox{0.99\columnwidth}{!}{\includegraphics{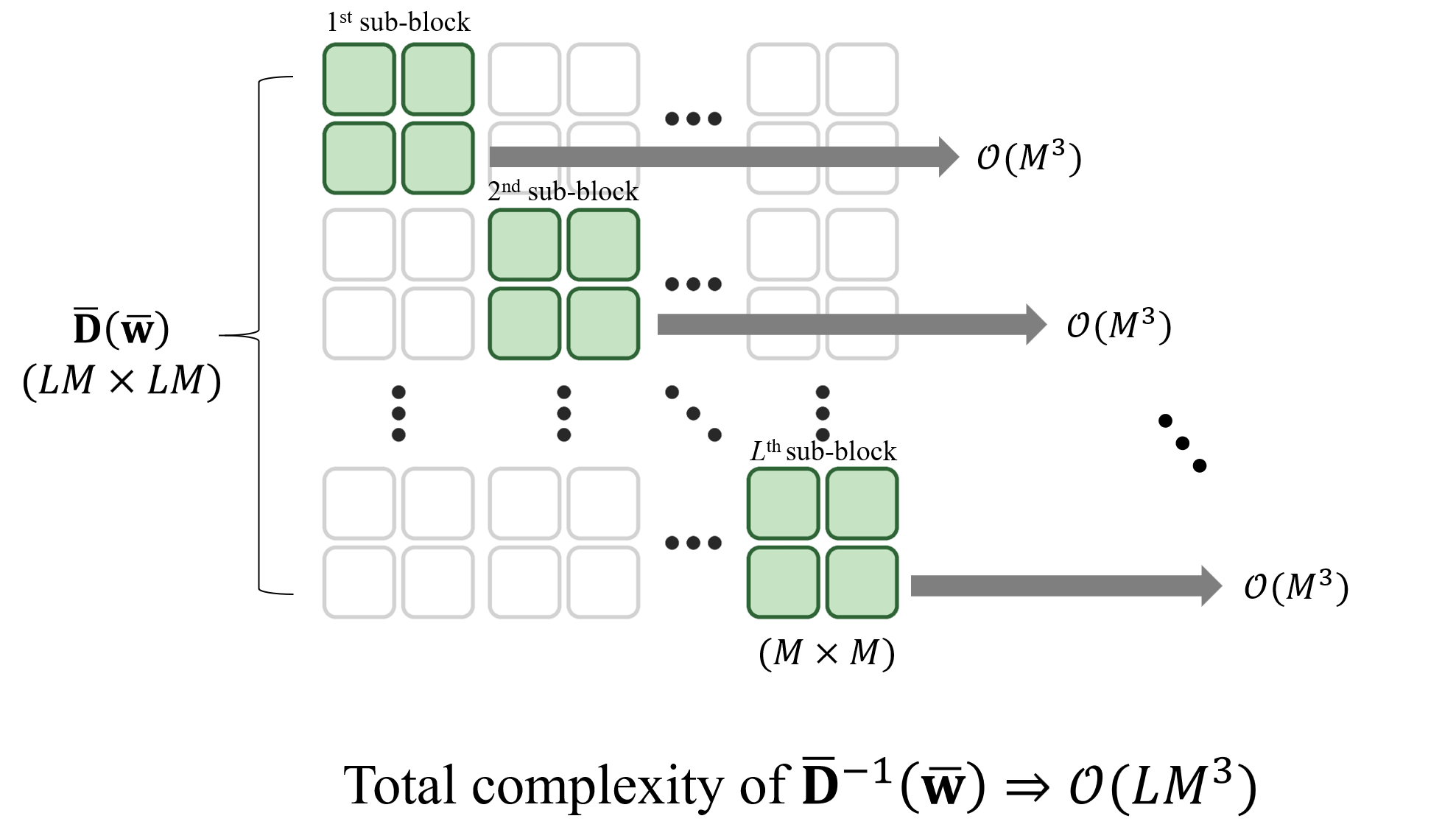}}}
    \caption{\color{black}An illustration of the computationally efficient matrix inversion of $\bar{\bD}(\bar{\bw})$ in Algorithm 2 via block-wise decomposition.}
    \label{fig:Alg2_complexity}}
\end{figure}
Similarly, Algorithm~\ref{alg:algorithm_2} complexity is dominated by the inversion in $\bar{\bD}^{-1} (\bw)$ for the regularized GPI method.
Since $\bar{\bD}(\bw)$ is also a block-diagonal and symmetric matrix, we only need $\CMcal{O}(LM^3)$ to obtain the inverse of each sub-matrix in $\bar{\bD}(\bw)$ {\color{black}as depicted in Fig.~\ref{fig:Alg2_complexity}.}
Accordingly, the total complexity of Algorithm~\ref{alg:algorithm_2} is $\CMcal{O}(T_2 LM^3)$ where $T_2$ is the number of its iterations.
The state-of-the-art RIS phase-shift optimization scheme such as the MM algorithm or the MO algorithm have the complexity order of $\CMcal{O}(M^3 + T_{i} M^2)$ where $T_i$ denotes the number of iterations required for the MM algorithm or the MO algorithm with $i\in \{\rm{MM, MO}\}$ in single-RIS-aided systems \cite{pan2020multicell}.
Extension to multi-RIS systems, they further need the complexity order of $\CMcal{O}((LM)^3 + T_{i}(LM)^2)$ \cite{yoon2023joint}.

Finally, let us denote the number of iterations set for the line search of $\mu$ as $T_{\mu}$.
Then, the total complexity of Algorithm~\ref{alg:algorithm_3} (GPI-PRIS) is $\CMcal{O}(T_{\mu}(T_1 KN^3 + T_2 LM^3))$.
Considering $M > N$, the main bottleneck arises from Algorithm~\ref{alg:algorithm_2}, and the complexity becomes  $\CMcal{O}(T_{\mu}T_2 LM^3)$.
\begin{remark}
    [Complexity Comparison and Scalability]\normalfont

    Our algorithm has a comparable complexity order to representative RIS optimization schemes in terms of $M$.
    However, owing to the special structure of $\bar{\bD}(\bw)$, a block-diagonal and symmetric matrix, our algorithm exhibits linear scaling with respect to $L$.
    This linear growth makes the algorithm particularly advantageous for scalable beamforming in multi-RIS-aided systems when efficient handling of a large number of RISs is essential.
    {\color{black}Particularly, it was observed that increasing the number of RISs with fixed total RIS elements $M_{\rm tot} = L \times M$ is advantageous up to a certain optimal $L$ \cite{al2024performance}.
    For the case of a single RIS $(L=1)$ with a large number of RIS elements, the complexity of our method scales as similarly to conventional approaches, requiring a matrix inversion of size $M_{\rm tot} \times M_{\rm tot}$, which results in $\CMcal{O}(M_{\rm tot}^3)$ complexity.
    In contrast, for multi-RIS-aided systems $(L>1)$, the block-diagonal structure allows us to decompose the problem into $L$ independent sub-inversions, each corresponding to an RIS with $M$ elements.
    Considering $M= M_{\rm tot}/L$, the complexity of our method scales as $\CMcal{O}(LM^3) = \CMcal{O}\left(L \cdot \left(\frac{M_{\rm tot}}{L}\right)^3 \right) = \CMcal{O}\left(\frac{M_{\rm tot}^3}{L^2}\right)$, allowing for flexible deployment of multiple RISs to maximize the sum SE.}
    This is also numerically verified in Fig.~\ref{fig:Scalability} in Section~\ref{subsec:evalution}.

    \label{rm:complexity}
\end{remark}

{\color{black}
\subsection{Reduced-Complexity Approach} \label{subsec:reduction}
{\color{black}
While our proposed algorithm exhibits linear scalability in $L$, the cubic dependence on $M$ incurs high computational cost.
To mitigate this issue, we introduce a reduced-complexity GPI approach based on the framework in \cite{oh2025generalized}.
As a first step, we approximate $\bC_{k,\ell}$ by diagonal matrix: $\bC_{k,\ell} \approx a_{k,\ell}\bI_{NM}$. 
Here, $a_{k,\ell}$ is the large-scale fading of user $k$ associated with RIS $\ell$.
Consequently, from \eqref{eq:error_cov}, we have $\bR^{\sf e}_{k,\ell} \approx \left(a_{k,\ell} \bI_{NM} + \frac{T_{\sf UL}\rho_{\sf UL}}{\gamma_{k,\ell}\sigma^2}\bI_{NM} \right)^{-1} = b_{k,\ell}\bI_{NM}$.}
This approximation implies that $\boldsymbol{\Xi}_{k,\ell}$ and $\boldsymbol{\Theta}_{k,\ell}$ can also be approximated as diagonal matrices.
Subsequently, we reorganize the following expressions:
\begin{align}
    &\bar{\bD}(\bw) = \sum_{k=1}^{K}\frac{\bD_k}{\bw^{\sf H}\bD_k\bw} + \bX(\bw),
    \\
    &\bX(\bw) = \frac{\mu}{\tau} \frac{\sum_{i=1}^{LM} {\bX}_i e^{\alpha_1 {\bw}^{\sf H}{\bX}_{i}{\bw}}}{\sum_{i=1}^{LM} e^{\alpha_1 {\bw}^{\sf H}{\bX}_{i}{\bw}}}.
\end{align}
Since $\bar{\bD}(\bw) = {\rm blkdiag}\left(\bar{\bD}_{1}(\bw), \cdots, \bar{\bD}_{L}(\bw)\right)$ is block-diagonal, its $\ell$th sub-block is defined as 
\begin{align}
    \nonumber
    \bar{\bD}_{\ell}(\bw) &\!=\! 
    \sum_{k=1}^{K}\frac{1}{\bw^{\sf H}\bD_k\bw} \left({\hat{\bH}_{k,\ell}^{\sf r H}} \sum_{i\neq k}^K \bff_i \bff_i^{\sf H} \hat{\bH}^{\sf r}_{k,\ell} + \boldsymbol{\Theta}_{k,\ell}\right) + \bX(\bw)
    \\
    &\!=\! \sum_{k=1}^{K}\sum_{i\neq k}^K c_k(\bw)\bv_{i,k,\ell}\bv_{i,k,\ell}^{\sf H} + \sum_{k=1}^{K}\boldsymbol{\Theta}_{k,\ell} + \bX(\bw),
\end{align}
where $\bv_{i,k,\ell} = {\hat{\bH}_{k,\ell}^{\sf r H}}{\bff}_i \in \bbC^{M}$ and $c_k(\bw) = \frac{1}{\bw^{\sf H}\bD_k \bw}$.
Approximating each $\boldsymbol{\Theta}_{k,\ell}$ by a diagonal matrix as above, we can approximate $\bar{\bD}_{\ell}(\bw)$ as
\begin{align}
    \label{eq:D_sub_tilde}
    \bar{\bD}_{\ell}(\bw) &\approx \sum_{k=1}^{K}\sum_{i\neq k}^K c_k(\bw)\bv_{i,k,\ell}\bv_{i,k,\ell}^{\sf H} + \bY_{\ell}(\bw) = \tilde{\bD}_{\ell}(\bw),
\end{align}
where $\bY_{\ell}(\bw) = \sum_{k=1}^{K}\boldsymbol{\Theta}_{k,\ell} + \bX(\bw)$ is a diagonal matrix.
The inverse of the $\ell$th sub-block is given by
\begin{align}
    \tilde{\bD}^{-1}_{\ell}(\bw) = \left[\sum_{k=1}^{K}\sum_{i\neq k}^K c_k(\bw)\bv_{i,k,\ell}\bv_{i,k,\ell}^{\sf H} + \bY_{\ell}(\bw)\right]^{-1}.
\end{align}
By applying the Sherman-Morrison-Woodbury (SMW) formula: $\left(\bA + \bu \bv^{\sf H}\right)^{-1} = \bA^{-1} - \frac{\bA^{-1}\bu\bv^{\sf H}\bA^{-1}}{1 + \bv^{\sf H}\bA^{-1}\bu}$,  we can easily compute the inverse of \eqref{eq:D_sub_tilde} in a recursive manner.
Specifically, the inverse of $\bar{\bD}_{\ell}(\bw)$ is obtained by recursively computing the previously obtained inverse via the SMW formula as
\begin{align}
    \label{eq:D_sub_tilde_inv}
    &\tilde{\bD}_{\ell, (p)}^{-1}(\bw) =\left[\tilde{\bD}_{\ell, (p-1)}(\bw) + c_{(p)}(\bw)\bv_{\ell,(p)}\bv_{\ell,(p)}^{\sf H}\right]^{-1} 
    \\
    \nonumber
    &=\! \tilde{\bD}^{-1}_{\ell, (p-1)}(\bw) \!-\! \frac{c_{(p)}(\bw)\tilde{\bD}_{\ell, (p-1)}^{-1}(\bw)\bv_{\ell,(p)}\bv_{\ell,(p)}^{\sf H}\tilde{\bD}_{\ell, (p-1)}^{-1}(\bw)}{1 + c_{(p)}(\bw)\bv^{\sf H}_{\ell,(p)}\tilde{\bD}_{\ell, (p-1)}^{-1}(\bw)\bv_{\ell,(p)}},
\end{align}
where $\tilde{\bD}^{-1}_{\ell, (p)}(\bw)$  denotes the inverse obtained after the $p$th rank-one update in \eqref{eq:D_sub_tilde}, $\bv_{\ell,(p)}$  is the 
$p$-th vector selected from the set $\{\bv_{i,k,\ell},\forall i,k \}$, and $c_{(p)}$ is the corresponding scalar associated with $\bv_{\ell,(p)}$.
By using this recursively rank-one update strategy, the complexity of obtaining \eqref{eq:D_sub_tilde} is $\CMcal{O}(K(K-1)M^2)$ in a divide-and-conquer manner.
Specifically, we accumulate $K(K-1)$ rank-one updates $\bv_{i,k,\ell} \bv_{i,k,\ell}^{\sf H}$, where each of which takes $\CMcal{O}(M^2)$ complexity.
Since there are $L$ sub-blocks, the overall complexity becomes $\CMcal{O}(LK(K-1)M^2)$.
For $M \gg K$, this typically much lower than $\CMcal{O}(LM^3)$ complexity of directly inverting the block-diagonal matrix $\bar{\bD}(\bw)$.
}

\section{Numerical Results} \label{sec:Numerical}
In this section, we present numerical comparisons between
our proposed algorithms and the following baselines:
\begin{itemize}
    \item \textbf{GPI-PRIS}: Our proposed algorithm  (Algorithm~\ref{alg:algorithm_3}).

    \item \textbf{R-GPI-PRIS}: Our proposed GPI-PRIS with the reduced-complexity approach as described in Section~\ref{subsec:reduction}.
        

    \item \textbf{WMMSE-MO}: The WMMSE precoder with manifold-based RIS phase shifts optimization \cite{guo2020weighted}.
    

    \item \textbf{WMMSE-PINet}: The WMMSE precoder with PINet-based RIS phase shifts optimization, which is a model-driven deep learning approach \cite{jin2023low}.

    \item {\color{black}\textbf{FP-SE}: A fractional programming-based sum SE maximization method by setting $\xi = 0$ in Algorithm~3 of \cite{huang2019reconfigurable}.}

    \item \textbf{RZF-MO}: A regularized zero-forcing (RZF) precoder with manifold-based RIS phase shifts optimization \cite{guo2020weighted}.

    \item \textbf{Alg~\ref{alg:algorithm_1}-MO}: Our proposed Algorithm~\ref{alg:algorithm_1} precoder with manifold-based RIS phase shifts optimization \cite{guo2020weighted}.

    \item \textbf{Alg~\ref{alg:algorithm_1}-Random}: With Algorithm~\ref{alg:algorithm_1}, phase shifts of each RIS are randomly selected.
\end{itemize}
We use RZF as an initial precoder of Algorithm~\ref{alg:algorithm_1}.
Using line search with $T_{\mu}=15$ linearly spaced points, we identify the optimal value of $\mu$ in range $\mu \in [0, 100]$.

\subsection{Simulation Environments}
In the considered system, {\color{black}we assume that both antennas at the BS and elements of the RISs are arranged in uniform planar arrays (UPAs).}
{\color{black}
In the presence of the same spatial paths $L_{\sf BR}$ for all BS-RIS links, the channel matrix between the BS and RIS $\ell$ is given by $\forall \ell \in \cL$
\begin{align}
    \label{eq:H_1_LoS}
    \bH_{1,\ell} = \frac{1}{\sqrt{L_{\sf BR}}} \sum_{i=1}^{L_{\sf BR}} \sqrt{\gamma_{1,\ell}} \ba_{\sf B}(\vartheta^{a}_{i, \ell}, \vartheta^{e}_{i, \ell}) \ba_{\sf R}^{\sf H} (\varphi^{a}_{i, \ell}, \varphi^{e}_{i, \ell})\;,
\end{align}
where $\gamma_{1, \ell}$ is the path gain of $\ell$th BS-RIS link, $\vartheta^{a}_{i, \ell}\;(\varphi^{a}_{i, \ell})$ and $\vartheta^{e}_{i, \ell}\;(\varphi^{e}_{i, \ell})$ denote azimuth and elevation of AoDs (AoAs), and $\ba(\cdot)$ denotes a steering vector.
Considering $N_y$ and $N_z$ as the horizontal and vertical indices of the BS antenna, i.e., $N = N_y \times N_z$, the array response vectors at the BS and RIS $\ell$ are defined as 
$\ba_{\sf B}(\vartheta^{a}_{i, \ell}, \vartheta^{e}_{i, \ell}) = \ba_{\sf B}^{y}(\vartheta^{a}_{i, \ell}, \vartheta^{e}_{i, \ell}) \otimes \ba_{\sf B}^{z}(\vartheta^{e}_{i, \ell}), \ba_{\sf R}(\varphi^{a}_{i, \ell}, \varphi^{e}_{i, \ell}) = \ba_{\sf R}^{y}(\varphi^{a}_{i, \ell}, \varphi^{e}_{i, \ell}) \otimes \ba_{\sf R}^{z}(\varphi^{e}_{i, \ell}).$
The array response vectors at the BS are defined as 
\begin{align}
    \label{eq:a_B_y}
    &\ba_{\sf B}^{y}(\vartheta^{a}_{i, \ell}, \vartheta^{e}_{i, \ell}) \!
    \\ \nonumber
    &=\! \left[\!1, e^{j 2\pi \frac{\Delta^y_{\sf B}}{\lambda_c}\!\sin(\vartheta^{a}_{i, \ell})\!\sin(\vartheta^{e}_{i, \ell})},\ldots,e^{j 2\pi (N_y \!-\! 1)  \frac{\Delta^y_{\sf B}}{\lambda_c}\!\sin(\vartheta^{a}_{i, \ell})\!\sin(\vartheta^{e}_{i, \ell})}\!\right]^{\!\sf T}\!\!\!,
    \\
    \label{eq:a_B_z}
    &\ba_{\sf B}^{z}(\vartheta^{e}_{i, \ell}) \!=\! \left[1, e^{j 2\pi \frac{\Delta^z_{\sf B}}{\lambda_c}\!\cos(\vartheta^{e}_{i, \ell})}, \!\ldots\! , e^{j 2\pi (N_z - 1)  \frac{\Delta^z_{\sf B}}{\lambda_c}\!\cos(\vartheta^{e}_{i, \ell})} \right]^{\sf T}\!\!\!,
\end{align}
where  $\Delta^y_{\sf B}$ and $\Delta^z_{\sf B}$ are the distance between adjacent BS antennas along two axes.
Similarly, with the horizontal and vertical dimensions of the RIS $M_y$ and $M_z$ (i.e., $M = M_y \times M_z$), its array response vectors, $\ba_{\sf R}^{y}(\varphi^{a}_{i, \ell}, \varphi^{e}_{i, \ell})$ and $\ba_{\sf R}^{z}(\varphi^{e}_{i, \ell})$, are defined analogously to \eqref{eq:a_B_y} and \eqref{eq:a_B_z}, respectively.}
We assume that the same total number of spatial paths between RIS and user is $L_{\sf RU}$ for all RIS-user links.
The RIS-user channel is
\begin{align}
    \label{eq:h_2_LoS}
    \bh_{2,k,\ell} = \frac{1}{\sqrt{L_{\sf RU}}} \sum_{i=1}^{L_{\sf RU}} \sqrt{{\gamma}_{2,k,\ell}} \alpha^{\sf sc}_{k,i,\ell} \ba_{\sf R}(\vartheta^{{\sf R},a}_{k,i,\ell}, \vartheta^{{\sf R},e}_{k,i,\ell}),
\end{align}
where $\gamma_{2, k,\ell}$ denotes the path gain of the RIS-user link, $\alpha^{\sf sc}_{k,i,\ell}$ denotes small-scale fading satisfying $\alpha^{\sf sc}_{k,i,\ell} \sim \cC \cN (0,1)$, and $(\vartheta^{{\sf R},a}_{k,i,\ell}, \vartheta^{{\sf R},e}_{k,i,\ell})$ denote and azimuth and elevation AoDs between RIS $\ell$ and user $k$.

\begin{figure}[t]    
    {\centerline{\resizebox{0.85\columnwidth}{!}{\includegraphics{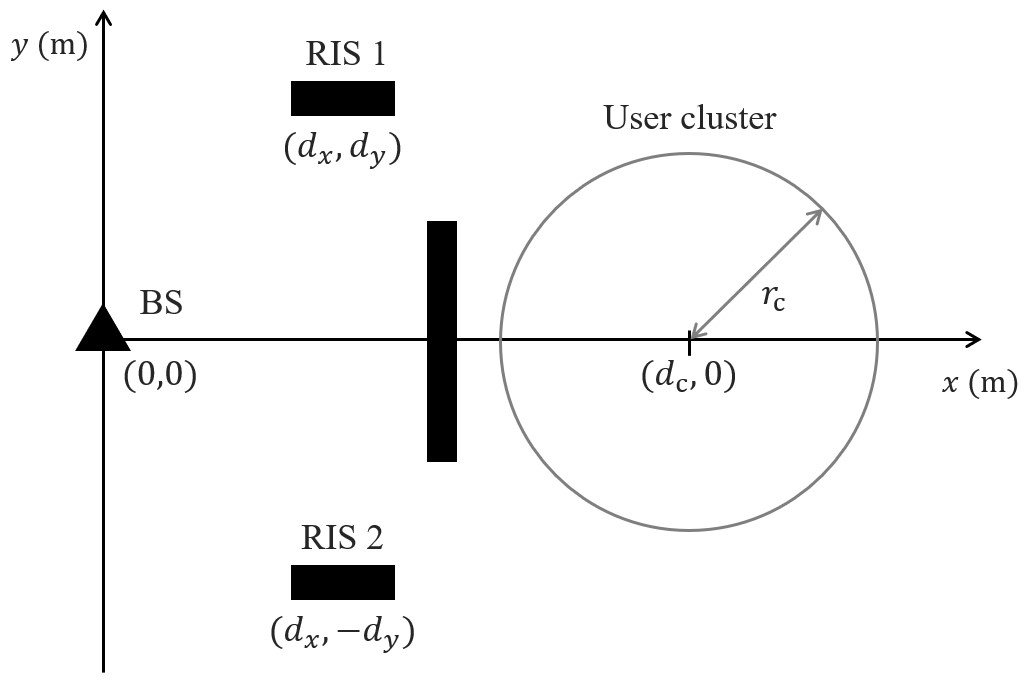}}}}
    \vspace{-0.1cm}
    \caption{A schematic of the multiuser network with two RISs}
    \label{fig:network}
\end{figure}
\begin{table}[t!]
\centering
\caption{\color{black}Comparison of Sum SE under different LogSumExp parameter}
\label{tab:LogSumExp_alpha}
\renewcommand{\arraystretch}{1.5} 
{\color{black}
\begin{tabular}{c c c c c c}
\toprule
\multirow{2}{*}{{$P$ (dBm)}} & \multicolumn{5}{c}{{Sum SE (bits/sec/Hz)}} \\
\cmidrule(lr){2-6}  
  & ${\alpha=0.001}$ & ${\alpha=0.01}$ & ${\alpha=0.1}$ & ${\alpha=1}$ & ${\alpha=10}$ \\ 
\midrule
0   & 10.84 & 10.98 & 11.09 & 11.29 & 11.25 \\
20  & NaN   & 25.98 & 26.39 & 25.68 & 23.86 \\
40  & NaN   & 31.01 & 30.97 & 29.59 & 26.74 \\
\bottomrule
\end{tabular}
}
\end{table}
\begin{table}[t]
    \caption{{\color{black}Simulation Parameters}}
    \vspace{-0.1cm}
    \label{tab:sim_params}
    \centering
    \renewcommand{\arraystretch}{1.2}  
    {\color{black}
    \begin{tabular}{l|c}
    \hlineB{2}
    \textbf{\color{black}Parameter} & \textbf{\color{black}Value} \\ \hline
    Carrier Frequency & 28 GHz \\
    Bandwidth ($W$) & 1 GHz \\
    Noise Figure ($n_f$) & 5 dB \\
    Noise Power & $-174 + 10\log_{10}W + n_f$ dBm \\
    Array Structure & uniform planar array (UPA) \\
    Antenna Spacing ($\Delta/\lambda_c$) & 0.5 \\ \hline
    BS Location & $(0, 0)$ m \\
    User Cluster Center ($d_c$) & $(60, 0)$ m \\
    User Cluster Radius ($r_c$) & 20 m \\
    RIS Locations & $(20, 20)$ m, $(20, -20)$ m \\ \hline
    Path Loss Parameters & $\alpha_{\sf PL}=61.4, \beta_{\sf PL}=2, \sigma_s=5.8$ dB \\
    Number of Paths ($L_{\sf BR}, L_{\sf RU}$) & 2 \\
    Uplink Training Length ($T_{\sf UL}$) & $M$ \\
    Uplink Training Power ($\rho_{\sf UL}$) & 0 dBm \\ \hline
    Maximum Iterations & $t_{1,\rm max} = t_{2,\rm max} = t_{3,\rm max} = 30$ \\
    Tolerance Levels & $\epsilon_1=\epsilon_3=10^{-2}, \epsilon_2=10^{-3}$ \\
    LogSumExp Parameters & $\alpha_1 = \alpha_2 = 0.1$ \\
    $\mu$ Search ($T_\mu$) & 15 (Range: $[0, 100]$) \\ 
    Initial Precoder & RZF \\
    Initial RIS Phase Shift & randomly generated $\phi \in [-\pi, \pi)$ \\ \hlineB{2}
    \end{tabular}
    }
\end{table}
Considering that the system operates at a 28 GHz carrier frequency, we adopt the mmWave pathloss model \cite{akdeniz2014millimeter}.
Thus, the pathloss in dB is given by 
\begin{align}
    \label{eq:PL}
    {\sf PL}(d)= \alpha_{\sf PL} + \beta_{\sf PL} 10 \log_{10} d + \chi \ [{\rm dB}],
\end{align}
where $d$ denotes the link distance in meter, $\chi \sim \cN (0,\sigma_s^2)$ is the log-normal shadowing.
This pathloss model is applied to the path gain terms in \eqref{eq:H_1_LoS} and \eqref{eq:h_2_LoS}.
According to \cite{akdeniz2014millimeter}, the experimental data
for 28 GHz channels and 1 GHz bandwidth indicates that the parameters in \eqref{eq:PL} are set to be $\alpha_{\sf PL} = 61.4\;{\rm dB}, \beta_{\sf PL}=2\;{\rm dB}, \sigma_s^2 = 5.8\;{\rm dB}$. 
Additionally, the noise power is given by
\begin{align}
    \label{eq:P_noise}
    P_{\sf noise}  = -174 + 10 \log_{10} W + n_f \ [{\rm dBm}],
\end{align}
where $W$ and $n_f$ are the channel bandwidth and noise figure at the BS.
{\color{black}Assuming the normalized noise variance, i.e., $\sigma^2 =1$, the large-scale path gain is computed also from normalizing the pathloss as}
\begin{align}
    \gamma = - ({\sf PL}(d) + P_{\sf noise})\  [{\rm dB}] .
\end{align}
In \eqref{eq:P_noise}, we consider $W\!=\!1$ GHz and $n_f \!=\! 5$ dB. 
In the considered channel model, we set $\Delta_{\sf B}/\lambda_c \!=\!  \Delta_y/\lambda_c \!=\! \Delta_z/\lambda_c \!=\! 0.5$, ${\color{black} L_{\sf BR} = L_{\sf RU} = 2}$, and randomly generate the signal azimuth and elevation angles of the UPA as $(\varphi^{a}_{i,\ell}, \vartheta^{{\sf R}, a}_{k,i,\ell}) \in [-\pi, \pi)$ and $(\varphi^{e}_{i,\ell},\vartheta^{{\sf R}, e}_{k,i,\ell}) \in [-\pi/2, \pi/2]$, and also the AoD of the ULA as $\vartheta^{\sf B}_{\ell} \in [0, \pi],\; \forall \ell, k, i$.
For the channel estimation discussed in Section~\ref{subsec:channel_est}, we set $T_{\sf UL} = M$ and $\rho_{\sf UL} = 0\;{\rm dBm}$.

In the simulations, we consider the systems assisted with $L=2$ RISs unless mentioned otherwise.
To illustrate the location of the entities in the considered
system, we apply a two-dimensional (2D) coordinate system as shown in Fig.~\ref{fig:network}.
The BS and origin of the circle are set as $(0, 0)$ and $(d_{{\sf c}}, 0)$, respectively.
The users are randomly generated within $r_{\sf c} = 20$ m radius of a circle, and the distance between the BS and the origin of the circle is set to be {\color{black}$d_{{\sf c}} = 60$ m}.
The locations of RISs are set as $(d_{x}, d_{y})$ and $(d_{x}, -d_{y})$, where $d_{x} \!=\! d_{y} \!=\! 20$ m.

{\color{black} 
To empirically justify the setting of the smoothing parameters $\alpha_{1}$ and $\alpha_{2}$ in LogSumExp, we investigate their impact on the performance of GPI-PRIS. 
These parameters govern the fundamental trade-off between the tightness of the approximation and the numerical stability of the algorithm.
{\color{black} 
Consistent with the theoretical analysis in Section~\ref{subsec:RIS}, as $\alpha \to 0$, the LogSumExp approximation provides a tighter approximation of the non-smooth max/min functions.
However, this improved accuracy comes at the cost of inducing extreme curvature in the optimization landscape. 
As discussed, such sharp curvature increases the spectral radius of the iteration mapping’s Jacobian matrix, which can violate the contraction mapping condition essential for convergence.
The results demonstrate that setting $\alpha=0.1$ yields the reasonably high sum SE, offering a suitable balance between approximation accuracy and stability. 
Conversely, reducing $\alpha$ to 0.001 caused computational failure (`NaN' in Table~\ref{tab:LogSumExp_alpha}). 
This empirical failure confirms our theoretical insight that the loss of contraction at extremely small $\alpha$ results in algorithm divergence.}
Based on these empirical findings, we adopted $\alpha_1 = \alpha_2 = 0.1$ for all subsequent simulations.
}

\subsection{Performance Evaluation}
\label{subsec:evalution}
\begin{figure}[t]    
    {\centerline{\resizebox{0.95\columnwidth}{!}{\includegraphics{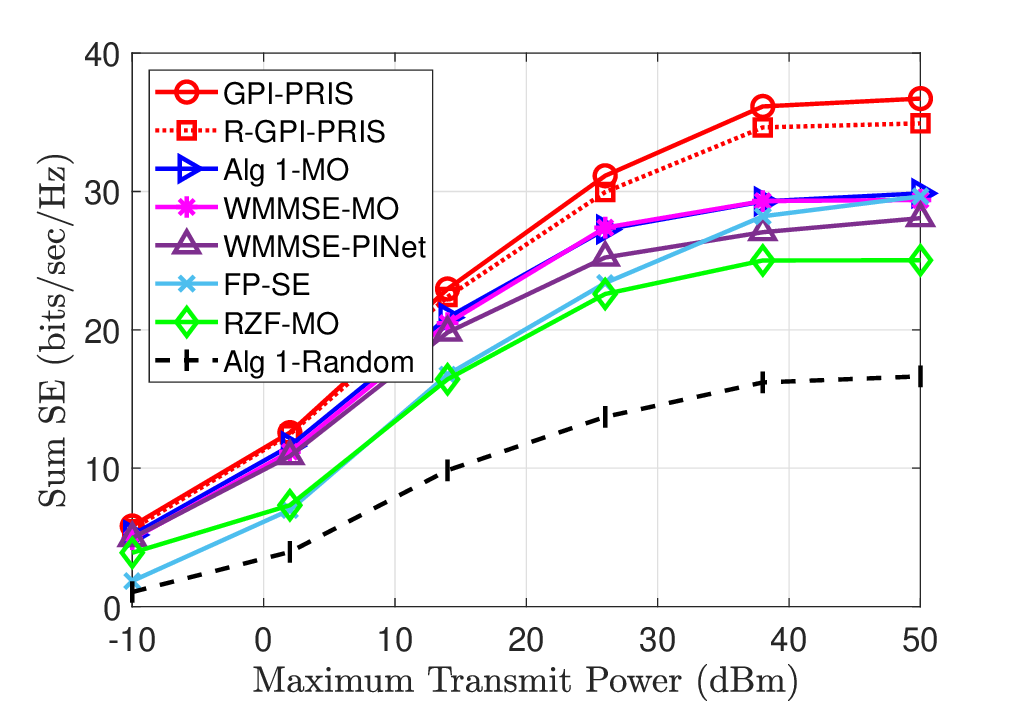}}}}
    \vspace{-0.1cm}
    \caption{The sum SE versus the maximum transmit power $P$ dBm for $N = 16$ BS antennas, $K = 4$ users, and $M = 64$ RIS phase shifts.
    \label{fig:RatevsP}}
\end{figure}

\textit{1) Maximum Transmit Power vs. Sum SE}:  
We evaluate the performance of the proposed algorithm with respect to the maximum transmit power $P$  for {\color{black}$N\!=\!16$, $K\!=\!4$, $M\!=\!64$ ($M_y \!=\! 8$, $M_z \!=\! 8$)}.
As shown in Fig.~\ref{fig:RatevsP}, our method achieves substantial SE gains  compared to all baselines across all transmit power regimes.
Specifically, the proposed algorithm exhibits a significant performance improvement over the random-phase scheme.
In addition, RZF, due to its linear nature, shows limited performance in maximizing the sum SE.
This observation suggests that SE performance can be effectively enhanced through RIS phase shift optimization.
In terms of RIS phase shift optimization, our method shows significant  performance improvement compared to Alg~\ref{alg:algorithm_1}-MO by finding a superior local optimal solution.
Consequently, our method demonstrates the effective joint optimization for the precoder and  RIS phase shift over different transmit power regime.

\begin{figure}[t]    
    {\centerline{\resizebox{0.95\columnwidth}{!}{\includegraphics{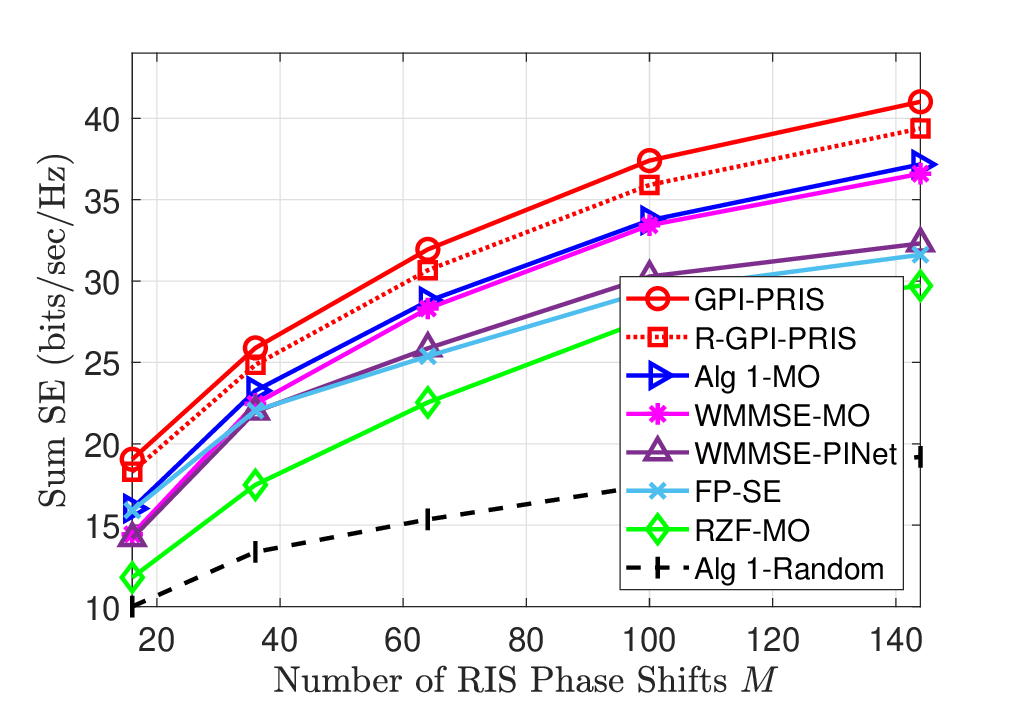}}}
    \vspace{-0.1cm}
    \caption{The sum SE versus the number of RIS phase shifts $M$ for $N = 16$ BS antennas, $K = 4$ users, and $P = 30$ dBm maximum transmit power.}
    \label{fig:RatevsM}}
\end{figure}
\textit{2) Number of RIS Phase Shifts vs. Sum SE}:
We compare the performance of the proposed algorithm and baselines in terms of the number of RIS phase shifts $M$ with $M_y \!=\! M_z$.
We depict the comparison results in Fig.~\ref{fig:RatevsM} for $N\!=\!16$, $K\!=\!4$, and $P\!=\!30$ dBm.
In Fig.~\ref{fig:RatevsM}, the proposed algorithm achieves the highest performance across all tested numbers of RIS elements.
Fig.~\ref{fig:RatevsM} also shows a significant increase in the SE performance with increasing $M$ for our method.
In contrast, the baseline methods show only marginal improvements with increasing $M$. 
This improvement is attributed to the efficient transmission strategy by properly incorporating the partial CSIT and identifying a superior local optimal point the regularized GPI method.
These results emphasize the suitability of the proposed algorithm  for  high-speed data communications and coverage expansion by employing multiple RISs with a number of elements.


\begin{figure}[t]    
    {\centerline{\resizebox{0.95\columnwidth}{!}{\includegraphics{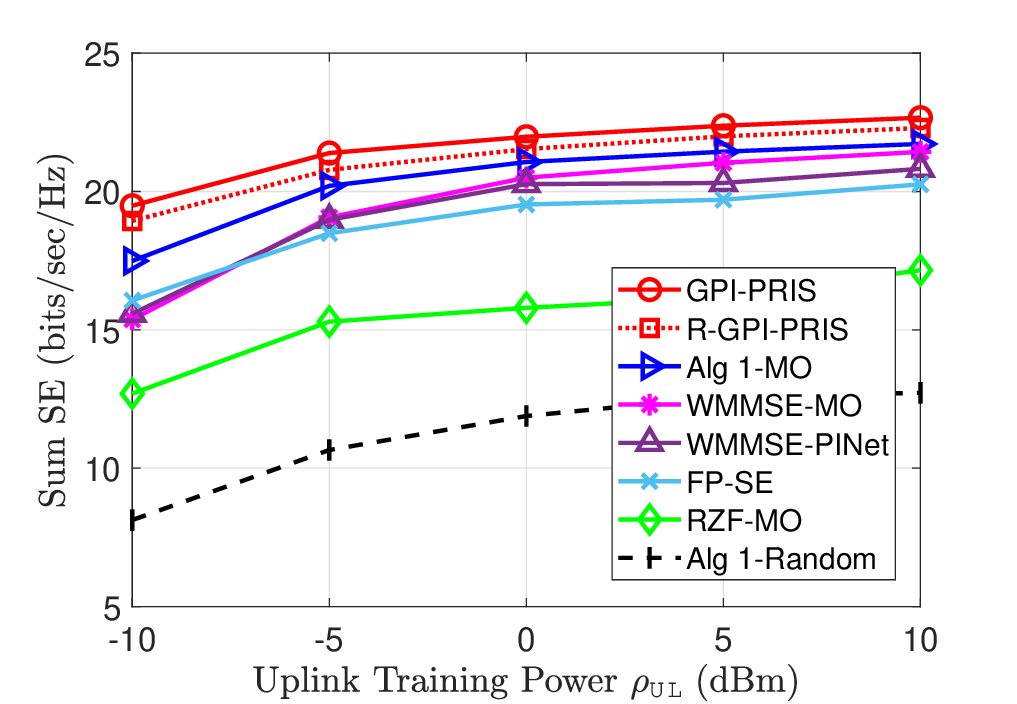}}}
    \vspace{-0.1cm}
    \caption{The sum SE versus the channel estimation parameter $\rho_{\sf UL}$ for $N = 16$ BS antennas, $K = 4$ users, $M = 32$ RIS phase shifts, and $P = 20$ dBm maximum transmit power.}
    \label{fig:CE_para}} 
\end{figure}

\textit{3) CSIT Accuracy}:  
We investigate the sum SE in relation to the accuracy of CSIT.
In this simulation, we consider $N=16$, $K=4$, $M=32$, and $P=20\;{\rm dBm}$.
Note that the increased value of $\rho_{\sf UL}$ directly correlates with improved channel estimation accuracy. 
As expected, Fig.~\ref{fig:CE_para} shows an increasing SE with $\rho_{\sf UL}$.
Fig.~\ref{fig:CE_para} demonstrates that our algorithm achieves the highest SE performance over different $\rho_{\sf UL}$.
We also observe that there exists performance gap between Alg~\ref{alg:algorithm_1}-MO and WMMSE-MO in the low $\rho_{\sf UL}$ regime from Fig.~\ref{fig:CE_para}.
This performance gap arises from Algorithm~\ref{alg:algorithm_1} that the error covariance is embedded to effectively handle channel estimation errors.
For such a reason, GPI-PRIS has robust performance in the coarse channel estimation environment owing to both Algorithm~\ref{alg:algorithm_1} and Algorithm~\ref{alg:algorithm_2} utilizing the error covariance. 
Thus, leveraging our GPI-based optimization framework and incorporating the error covariance-embedded transmission strategy, our algorithm offers robust joint beamforming solutions that maintain the highest performance.

\begin{figure}[!t]
    \centering
    $\begin{array}{c}
    {\resizebox{0.95\columnwidth}{!}{\includegraphics{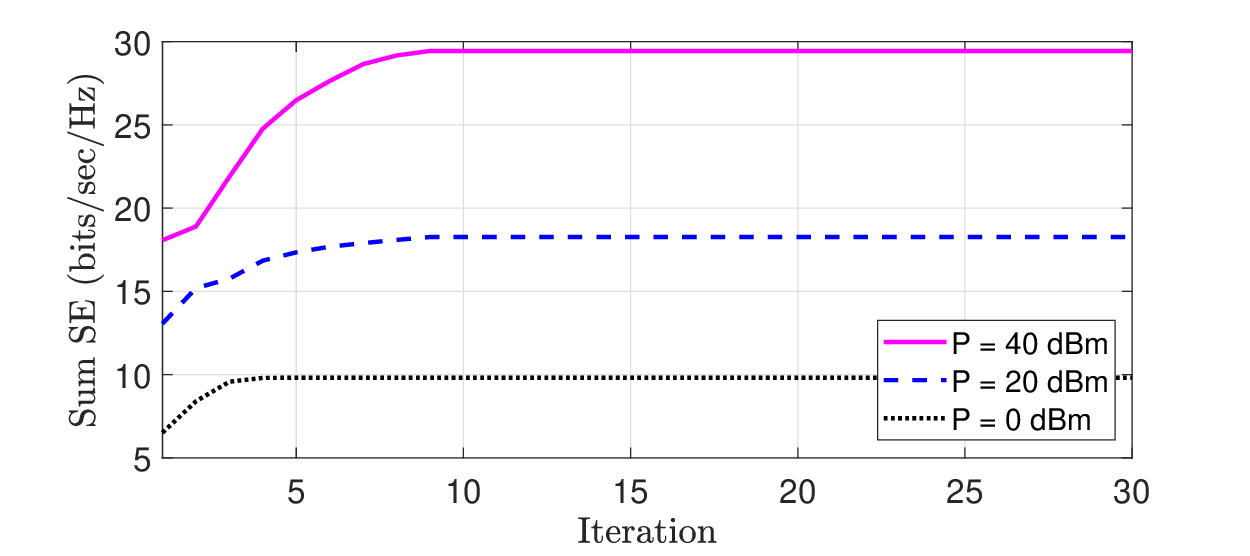}}
    } \\ \mbox{\small (a) $M=32$}
    \end{array}$
    
    $\begin{array}{c}
    {\resizebox{0.95\columnwidth}{!}{\includegraphics{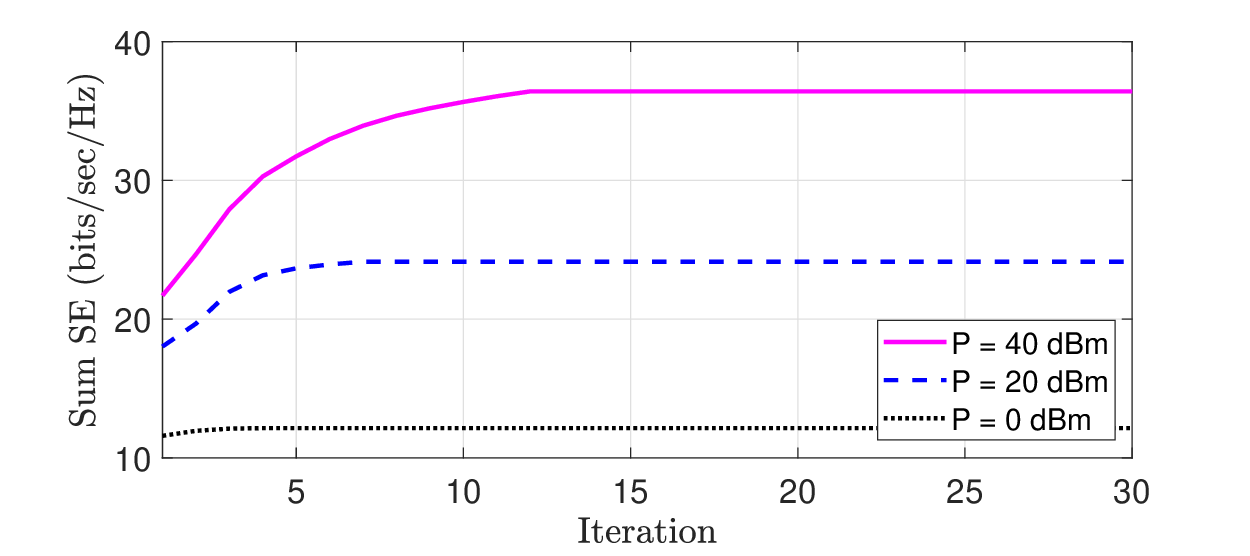}}
    }\\  \mbox{\small (b) $M=64$}
    \end{array}$
    \vspace{-0.1cm}
    \caption{Convergence behavior of GPI-PRIS (Algorithm~\ref{alg:algorithm_3}) for $N=16$ BS antennas, $K=4$ users, $M \in \{32, 64\}$ RIS phase shifts, and $P \in \{0, 20, 40\}$ dBm maximum transmit power.}
    \label{fig:Convergence}
\end{figure}
\textit{4) Convergence \& Initialization Behavior}:
To identify the convergence behavior of the proposed algorithm, we evaluate the proposed GPI-PRIS regarding the sum SE versus the number of outer iterations for the various transmit power regime as $P \in \{0, 20, 40\}$ dBm.
We assume $N=16$ and $K=4$ in both $M =32$ and $M =64$ cases.
For the $M\!=\!32$ ($M_y \!=\! 8$, $M_z \!=\! 4$) case, Fig.~\ref{fig:Convergence}(a) illustrates that the proposed algorithm converges within 11 iterations. 
Additionally, we examine a larger scale RIS configuration with $M\!=\!64$  ($M_y \!=\! 8$, $M_z \!=\! 8$) as depicted in Fig.~\ref{fig:Convergence}(b). 
This scenario demonstrates that the proposed algorithm requires 13 iteration to converge.
The increase in the iteration counts can be attributed to the expanded search space for the number of RIS elements.
The results underscore that the proposed algorithm achieves fast convergence.

{\color{black}
\begin{figure}[!t]
    \centering
    $\begin{array}{c}
    {\resizebox{0.95\columnwidth}{!}{\includegraphics{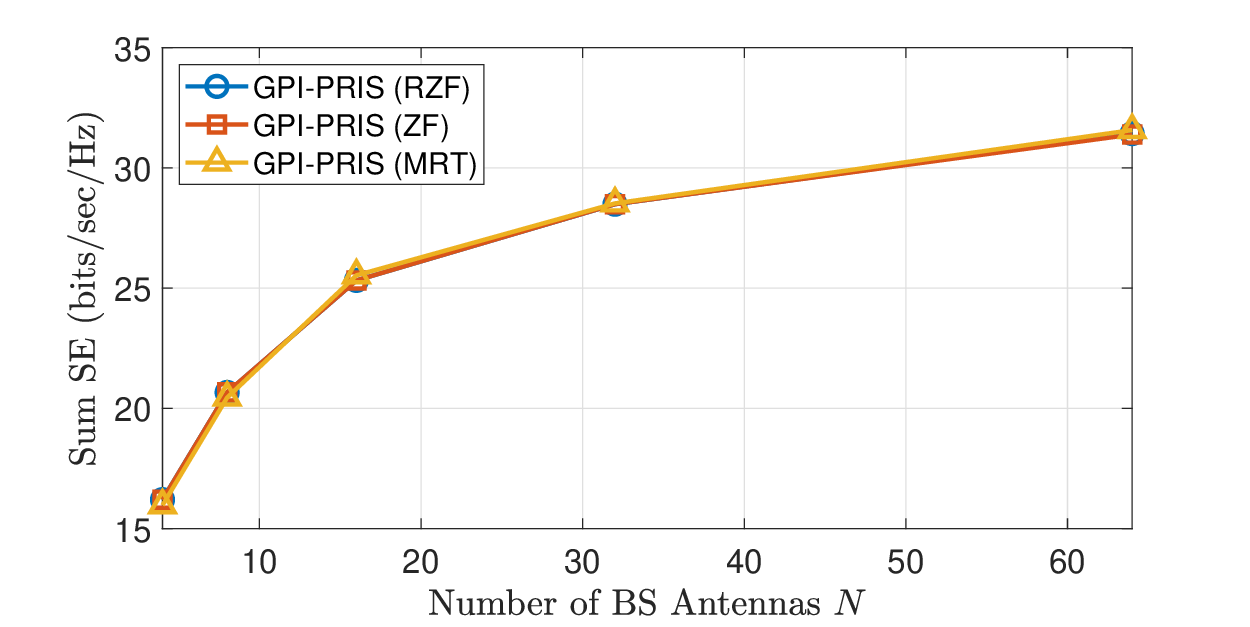}}
    } \\ \mbox{\small (a) {\color{black}$N$ vs. sum SE}}
    \end{array}$
    $\begin{array}{c}
    {\resizebox{0.95\columnwidth}{!}{\includegraphics{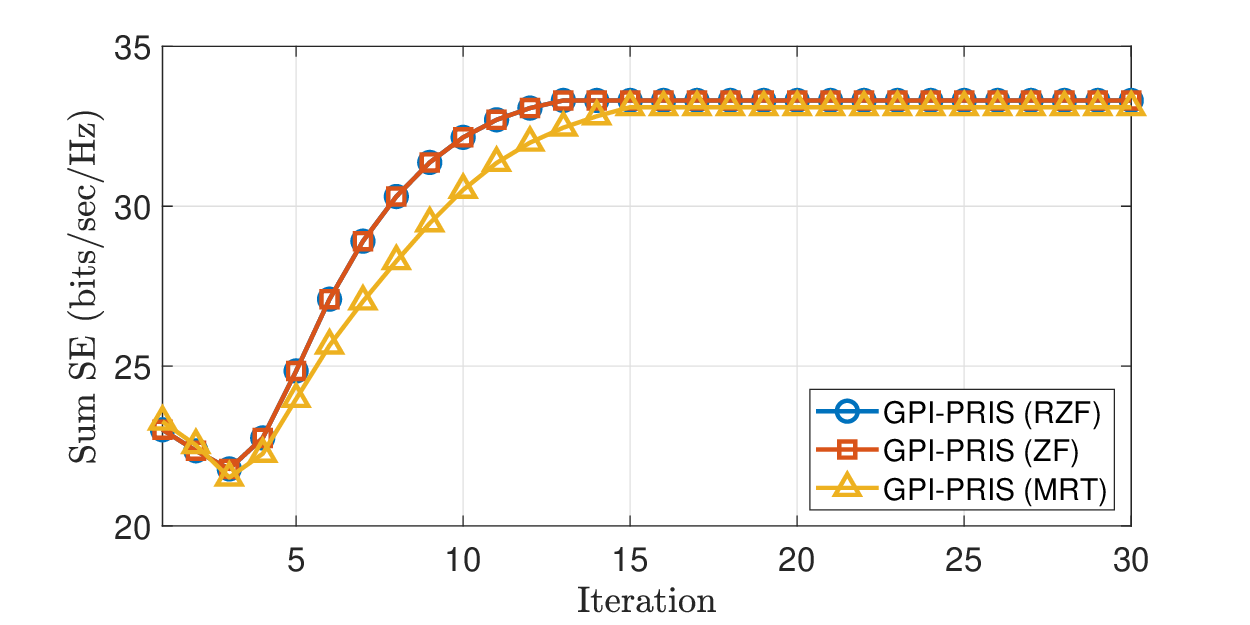}}
    }\\  \mbox{\small (b) {\color{black}Convergence}}
    \end{array}$
    \caption{{\color{black}The comparison of different initial precoders (RZF, ZF, and MRT) for GPI-PRIS.}}
    \label{fig:precoder_init}
\end{figure}
}

{\color{black}
To verify the convergence stability of the proposed algorithm, we evaluated the sum SE performance under different initial precoders: RZF, zero-forcing (ZF), and maximum ratio transmission (MRT). 
Fig.~\ref{fig:precoder_init}(a) examines the effect of varying $N$ while fixing $M=32, K=4$, and $P=40$ dBm.
As shown in Fig.~\ref{fig:precoder_init}(a), the sum SE performance improves as the number of BS antennas $N$ increases, while exhibiting negligible differences among the different initialization schemes.
Furthermore, Fig.~\ref{fig:precoder_init}(b) illustrates the convergence behavior of the proposed algorithm for $N=16, K=4$, and $M=32$.
It is observed that despite different starting points, all initialization schemes converge to the similar steady-state sum SE value within a few iterations. 
This indicates that the proposed GPI-PRIS is robust to initialization and reliably converges to a superior local optimal solution without requiring careful tuning of the starting point.
}

\textit{5) Regularization Behavior}:
We analyze the regularization behavior of our algorithms based on a GPI method for $N\!=\!16$, $K\!=\!4$, and $M\!=\!64$ $(M_y\!=\!8, M_z\!=\!8)$ with the following metric:
\begin{align}
    \label{eq:nmse}
    {\rm NMSE} \left(\sqrt{LM}|\bw^{\star}| - {\bf 1}\right),
\end{align}
where ${\rm NMSE}(\cdot)$ denotes a normalized mean square error (NMSE) function and ${\bf 1}$ denotes a vector whose elements are one with proper dimension.
Fig.~\ref{fig:NMSE}(a) illustrates that NMSE decreases for all transmit power regimes as $\mu$ increases.
This behavior can be attributed to the increasing dominance of the penalty term in \eqref{eq:AO_RIS} with higher $\mu$, which effectively enforces the unit-modulus constraint on the RIS phase shifts. 
Consequently, as $\mu$ increases, NMSE approaches zero, which indicates that the unit-modulus constraint is nearly satisfied.

\begin{figure}[!t]
    \centering
    $\begin{array}{c}
    {\resizebox{0.95\columnwidth}{!}{\includegraphics{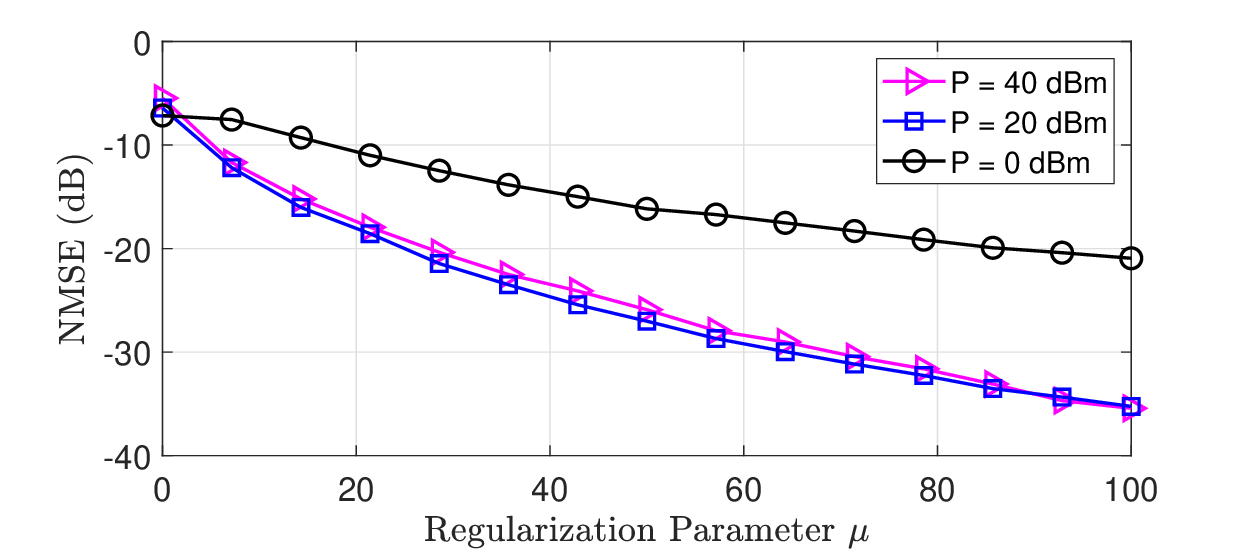}}
    } \\ \mbox{\small (a) NMSE}
    \end{array}$
    
    $\begin{array}{c}
    {\resizebox{0.95\columnwidth}{!}{\includegraphics{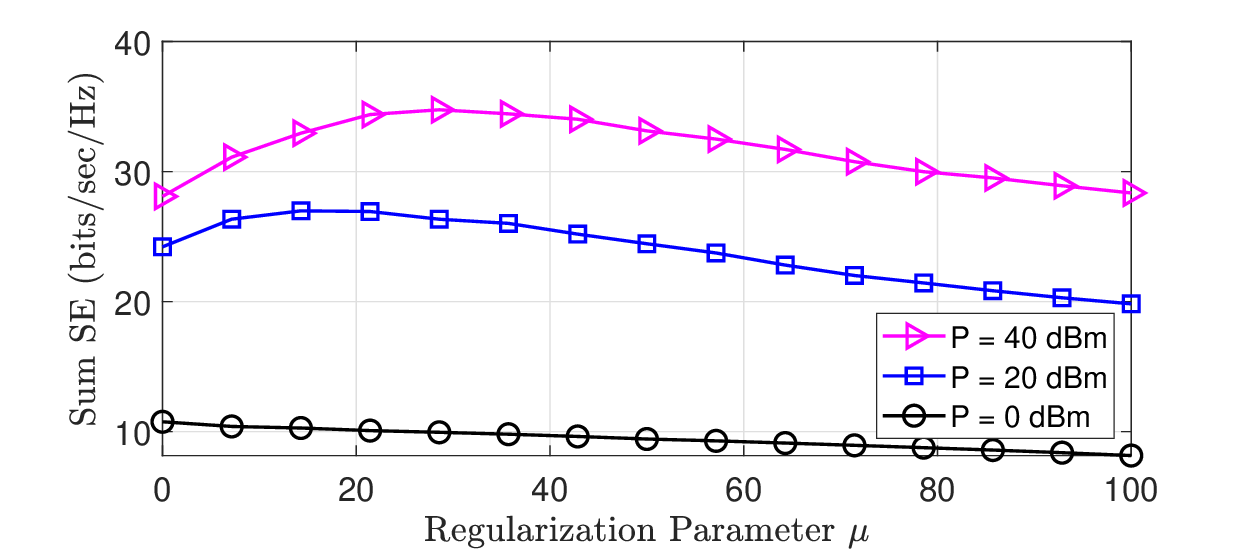}}
    }\\  \mbox{\small (b) Sum SE}
    \end{array}$
    \vspace{-0.1cm}
    \caption{The results in terms of the regularization parameter $\mu$ for $N=16$ BS antennas, $K=4$ users, $M=64$ RIS phase shifts, and $P \in \{0, 20, 40\}$ dBm maximum transmit power.}
    \label{fig:NMSE}
\end{figure}

We also verify the SE performance with respect to the value of $\mu$ with the same scenario considered above.
In Fig.~\ref{fig:NMSE}(b), it is observed that the sum SE is effectively maximized using the line search method. 
{\color{black}
Let $\bar{\boldsymbol{\phi}}_{\sf d} \!=\! [{\boldsymbol{\phi}}_{{\sf d},1},\!\cdots\!, {\boldsymbol{\phi}}_{{\sf d},L}]$ be a deviation from the optimized RIS elements prior to projecting onto the feasible solution set, which is defined as $\bar{\boldsymbol{\phi}}_{\sf d} \!=\! e^{j{\rm arg}(\bw^{\star})} \!-\! \sqrt{LM}\bw^{\star}$.
At low $P$, the deviation has marginal impact on SE performance, since the AWGN power is more dominant than the possible errors caused by such deviation.
In this regard, the optimization primarily focuses on maximizing the sum SE with little consideration of the unit-modulus constraint.
{\color{black}At high $P$, 
the optimization  prioritizes strict adherence to the constraint, and thus the optimal $\mu$ becomes larger.
We note that depending on the transmit power, there exist optimal $\mu$.
In addition, the sum SE reveals the low sensitivity around the optimal region. 
In this regard, the suitable value of $\mu$ can be pre-defined with respect to the transmit power, which further reduces the complexity of the proposed algorithm.
Overall, we confirm that our approach enables a balanced trade-off in the regularized optimization problem over $P$.}

\begin{figure}[!t]
    \centering
    $\begin{array}{c}
    {\resizebox{0.95\columnwidth}{!}{\includegraphics{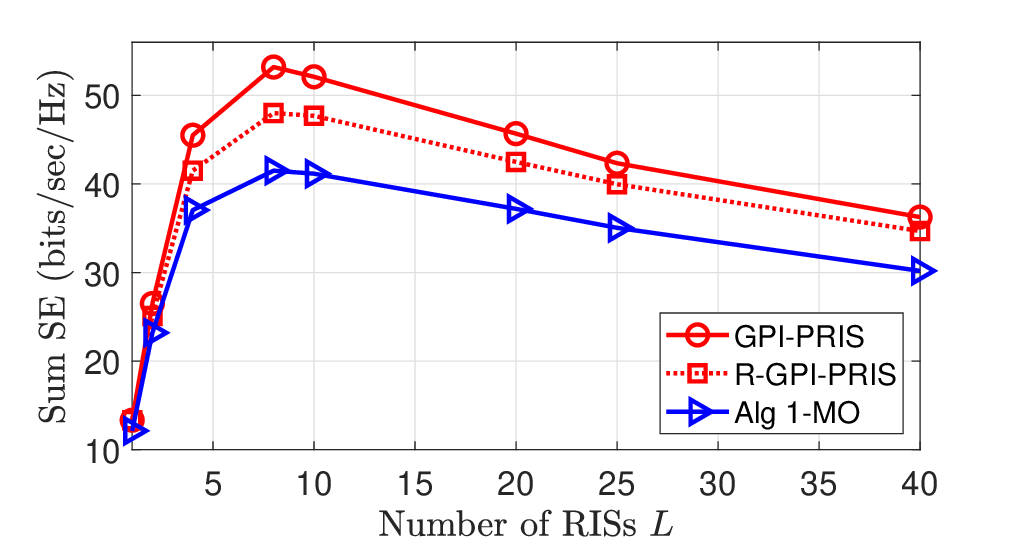}}
    } \\ \mbox{\small (a) Sum SE}
    \end{array}$
    $\begin{array}{c}
    {\resizebox{0.95\columnwidth}{!}{\includegraphics{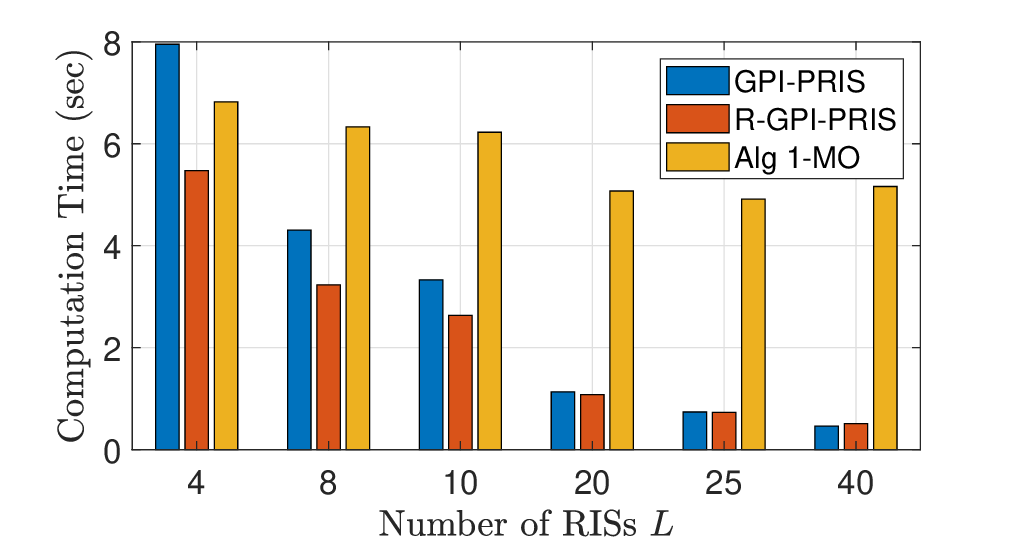}}
    }\\  \mbox{\small {(b) Computation Time}}
    \end{array}$
    \caption{The comparison of the proposed algorithms and the baseline over the number of RISs $L$ with $N=32, K=8$, and ${\rm SNR}= 20 {\rm \;dB}$. The total number of RIS elements is fixed as $M_{\rm tot} = LM=200$.}
    \label{fig:Scalability}
\end{figure}

\textit{6) Multi-RIS Scalability}:
We evaluate the sum SE in scalable RIS-aided systems by increasing $L$ and keeping the total number of deployed RIS elements $M_{\rm tot} = L \times M=200$.
The values of $M_y$ and $M_z$ are determined such that $M_{\rm tot}/L = M_y \times M_z$.
{\color{black}We consider $N=32$ and $K=8$.
Here, we use an empirically pre-determined value of $\mu$, and consider only the small-scale fading with both randomly distributed RISs and users while fixing all path-loss terms as 1 for simplicity.
Hence, in this scenario, we use a signal-to-noise ratio (SNR) $P/\sigma^2$ which is set to be $P/\sigma^2 = 20\;{\rm dB}$ SNR.
With fixed $M_{\rm tot}$, Fig.~\ref{fig:Scalability}(a) demonstrates performance improvement as $L$ increases, reaching an optimal configuration at $L=8$ from the considered cases.
We note that GPI-PRIS achieves the highest SE performance across different values of $L$.


We also assess computation time in MATLAB with the workstation (Intel i9-13900K CPU and 64GB RAM). 
We remark that at $L=8$, where the highest sum SE is observed, GPI-PRIS achieves about $28\%$ SE gain while requiring only about $68\%$ computation time of Alg \ref{alg:algorithm_1}-MO.
As discussed in Remark~\ref{rm:complexity}, GPI-PRIS exhibits complexity $\CMcal{O}(M_{\rm tot}^3/L^2)$, which decreases with $L$, whereas Alg \ref{alg:algorithm_1}-MO has constant complexity $\CMcal{O}(M_{\rm tot}^2)$ independent of $L$.
In Fig.~\ref{fig:Scalability}(b), the computation time for GPI-PRIS decreases as $L$ increases, while Alg \ref{alg:algorithm_1}-MO remains nearly constant, which aligns with our discussion.
Furthermore, R-GPI-PRIS, with complexity $\CMcal{O}(M_{\rm tot}^2/L)$, achieves even lower computation time than GPI-PRIS.
Notably, at $L=4$, R-GPI-PRIS requires less computation time than Alg 1-MO, while GPI-PRIS remains more computationally expensive at this point.
This reduced-complexity variant effectively mitigates the computational burden for large $M$ with only marginal performance degradation.
These results offer a clearer understanding of how performance and complexity are balanced by the proposed algorithms.
}


Overall, our algorithm demonstrates both the superior SE performance and multi-RIS scalability compared to the baselines, which  can provide significant benefit of multi-RIS systems for future wireless communications.

\section{Conclusion} \label{sec:conclusion}
In this paper, we proposed the scalable  and effective beamforming design for multi-RIS-aided systems under imperfect CSIT. 
Aiming to maximize the sum SE, we  reformulated the optimization problem by deriving a lower bound of the instantaneous SE with partial CSIT. 
Dividing the problem into two subproblems, we developed the unified GPI-based optimization framework with regularization that identifies a superior local optimal solution.
In particular, by leveraging the block-diagonality of the GPI matrices, the proposed algorithm achieves multi-RIS scalability with respect to the number of RISs $L$, i.e., $\sim \CMcal{O}(1/L^2)$ for the fixed total RIS elements case.
Through simulations, we showed that our method not only outperforms conventional methods in terms of SE but also demonstrates significant computational scalability with respect to the number of RISs $L$, making it suitable for large-scale multi-RIS deployments.
Therefore, this work contributes both a robust and efficient beamforming design for practical multi-RIS-aided communication systems.
{\color{black}
For future work, it would be interesting to extend the proposed scalable framework to practical scenarios involving discrete RIS phase shifts, particularly 1-bit or 2-bit resolution phase shifts. 
Additionally, investigating the impact of near-field propagation in extremely large-scale RIS systems remains a promising direction for further research.}

\appendices

\section{Proof of Lemma~\ref{lem:NEP}} \label{pf:NEPv_precoder}
According to the stationary condition, the stationary points need to satisfy  $\partial \cL_{\sf BS}(\bar {\bf f})/\partial \bar {\bf f}^{\sf H} = 0$.
Thus, we take the partial derivative of $\cL_{\sf BS}(\bar {\bf f})$ with respect to ${\bf \bar f }$ and set it to be zero. 
By using the derivative of the Rayleigh quotient form as
\begin{align}
    \label{eq:derivative_matrix}
    \frac{\partial \left( \frac{{\bf \bar f }^{\sf H} {\bA}_k \bar {\bf f} }{{\bf \bar f }^{\sf H} {\bB}_k \bar {\bf f}} \right)}{\partial \bar {\bf f}^{\sf H}} = 2 \left( \frac{{\bf \bar f }^{\sf H} {\bA}_k \bar {\bf f} }{{\bf \bar f }^{\sf H} {\bB}_k \bar {\bf f}} \right) \left[  \frac{{\bA}_k \bar {\bf f} }{{\bf \bar f }^{\sf H} {\bA}_k \bar {\bf f}} - \frac{{\bB}_k \bar {\bf f} }{{\bf \bar f }^{\sf H} {\bB}_k \bar {\bf f}} \right],
\end{align}
we can calculate the partial derivative of $\cL_{\sf BS}(\bar {\bf f})$ in \eqref{eq:lamb_obj} as
\begin{align}
    \label{eq:partial_L1}
    \frac{\partial \cL_{\sf BS} (\bar{\bf f})}{\partial \bar {\bf f}^{\sf H}} = \sum_{k=1}^K \frac{2}{\ln 2} \left( \frac{{\bA}_k \bar {\bf f} }{{\bf \bar f }^{\sf H} {\bA}_k \bar {\bf f}} - \frac{{\bB}_k \bar {\bf f} }{{\bf \bar f }^{\sf H} {\bB}_k \bar {\bf f}} \right).
\end{align}
Using \eqref{eq:partial_L1} the stationary condition holds if
\begin{align}
    \label{eq:kkt_cond_rearrange}
    \sum_{k=1}^{K}  \left(\frac{\bA_k}{ {\bf \bar f }^{\sf H} \bA_k \bar {\bf f}} \right)\bar {\bf f}
    =\sum_{k=1}^{K} \left(\frac{\bB_k}{ {\bf \bar f }^{\sf H} \bB_k \bar {\bf f}} \right)\bar{\bf f}.
\end{align}
The stationary condition can be regarded as the generalized eigenvalue problem with its corresponding matrices $\bar{\bA}(\bff)$ and $\bar{\bB}(\bff)$ defined in \eqref{eq:A_KKT} and \eqref{eq:B_KKT}:
\begin{align}
    \bar{\bA}(\bar {\bf f}) \bar {\bf f} = \lambda_{\sf BS}(\bar {\bf f}) \bar{\bB} (\bar {\bf f}) \bar {\bf f}.
\end{align}
Here, $\lambda_{{\sf BS}, \sf{num}}(\bar{\bff})$ and $\lambda_{{\sf BS}, \sf{den}}(\bar{\bff})$ can be any function such that $\lambda_{\sf BS}(\bar{\bff}) = \lambda_{{\sf BS}, \sf{num}}(\bar{\bff}) / \lambda_{{\sf BS}, \sf{den}}(\bar{\bff})$.
We note that $\bar{\bB}(\bw)$  can be considered to be invertible.
This completes the proof.
\hfill $\blacksquare$

\section{Proof of Corollary~\ref{lem:NEP_RIS}} \label{pf:NEPv_RIS}
Using the LogSumExp approach in \ref{subsec:RIS}, the Lagrangian function of the problem \eqref{eq:AO_RIS_2} is expressed as $\Tilde{\cL}_{\sf RIS}(\bw)$ in \eqref{eq:Rayleigh_2}.
To find the stationary points, we take derivatives of \eqref{eq:Rayleigh_2} as
\begin{align}
    \label{eq:partial_2}
    &\frac{\partial \Tilde{\CMcal{L}}_{\sf RIS}({\bw})}{\partial {\bw}^{\sf H}} = \frac{2}{R_{\Sigma} \ln{2}}\sum_{k = 1}^{K} \left[\frac{\bC_k {\bw}}{{\bw}^{\sf H}\bC_k {\bw}} - \frac{\bD_k {\bw}}{{\bw}^{\sf H}\bD_k {\bw}}\right] 
    \nonumber \\
    &- \frac{2 \mu}{\tau} \frac{\sum_{m=1}^{M}\bX_m \bw e^{{\bw}^{\sf H}\bX_{m}{\bw}}}{\sum_{m=1}^{M} e^{{\bw}^{\sf H}\bX_{m}{\bw}}} + \frac{2 \mu}{\tau} \frac{\sum_{m=1}^{M} \bX_m \bw e^{\frac{{\bw}^{\sf H}\bX_{m}{\bw}}{-\alpha_2}}}{\sum_{m=1}^{M} e^{\frac{{\bw}^{\sf H}\bX_{m}{\bw}}{-\alpha_2}}}.
\end{align}
Similar to the proof of Lemma~\ref{lem:NEP},  we find the condition for $\partial 
 \Tilde{\cL}_{\sf RIS}({\bw})/\partial {\bw}^{\sf H} = 0$.
Consequently, from \eqref{eq:partial_2}, the stationary condition can be reformulated as  
\begin{align}
    \bar{\bC}(\bw) \bw = \lambda_{\sf RIS}(\bw) \bar{\bD}(\bw) \bw,
\end{align}
where its corresponding matrices $\bar{\bC}(\bw)$ and $\bar{\bD}(\bw)$ are defined in \eqref{eq:C_KKT} and \eqref{eq:D_KKT}.
Here, $\lambda_{{\sf RIS}, \sf{num}}(\bw)$ and $\lambda_{{\sf RIS}, \sf{den}}(\bw)$ can be any function such that $\lambda_{\sf RIS}(\bw) = \lambda_{{\sf RIS}, \sf{num}}(\bw) / \lambda_{{\sf RIS}, \sf{den}}(\bw)$.
We note that $\bar{\bD}(\bw)$  can be considered to be invertible.
\hfill $\blacksquare$

\bibliographystyle{IEEEtran}
\bibliography{ref}

@string{procieee="Proc. IEEE"}

@string{wcnc="Proc. IEEE Wireless Commun. and Netw. Conf."}

@string{jsac="IEEE J. Sel. Areas Commun."}

@string{jstsp="IEEE J. Sel. Topics Signal Process."}

@string{tvt="IEEE Trans. Veh. Technol."}

@string{tsp="IEEE Trans. Signal Process."}

@string{tcom="IEEE Trans. Commun."}

@string{twc="IEEE Trans. Wireless Commun."}

@string{tit="IEEE Trans. Inf. Theory"}

@string{ProcIEEE="Proc. of the IEEE"}

@string{splett="IEEE Signal Process. Lett."}

@string{commlett="IEEE Commun. Lett."}

@string{wcl = "IEEE Wireless Commun. Lett."}

@string{surv="IEEE Comm. Surveys \& Tutorials"}

@article{pan2022overview,
  title={{An overview of signal processing techniques for RIS/IRS-aided wireless systems}},
  author={Pan, Cunhua and Zhou, Gui and Zhi, Kangda and Hong, Sheng and Wu, Tuo and Pan, Yijin and Ren, Hong and Di Renzo, Marco and Swindlehurst, A Lee and Zhang, Rui and others},
  journal=jstsp,
  year={2022},
  publisher={IEEE}
}

@article{kundu2021channel,
  title={{Channel estimation for reconfigurable intelligent surface aided MISO communications: From LMMSE to deep learning solutions}},
  author={Kundu, Neel Kanth and McKay, Matthew R},
  journal={IEEE Open Jour. of the Commun. Society},
  volume={2},
  pages={471--487},
  year={2021},
  publisher={IEEE}
}

@article{choi2019joint,
  title={{Joint user selection, power allocation, and precoding design with imperfect CSIT for multi-cell MU-MIMO downlink systems}},
  author={Choi, Jiwook and Lee, Namyoon and Hong, Song-Nam and Caire, Giuseppe},
  journal=twc,
  volume={19},
  number={1},
  pages={162--176},
  year={2019},
  publisher={IEEE}
}

@article{shen2010dual,
  title={{On the dual formulation of boosting algorithms}},
  author={Shen, Chunhua and Li, Hanxi},
  journal={IEEE Trans. Pattern Analysis and Machine Intelligence},
  volume={32},
  number={12},
  pages={2216--2231},
  year={2010},
  publisher={IEEE}
}

@article{park2017dynamic,
  title={{Dynamic subarrays for hybrid precoding in wideband mmWave MIMO systems}},
  author={Park, Sungwoo and Alkhateeb, Ahmed and Heath, Robert W},
  journal=twc,
  volume={16},
  number={5},
  pages={2907--2920},
  year={2017},
  publisher={IEEE}
}

@article{jin2023low,
  title={{Low-complexity joint beamforming for RIS-assisted MU-MISO systems based on model-driven deep learning}},
  author={Jin, Weijie and Zhang, Jing and Wen, Chao-Kai and Jin, Shi and Li, Xiao and Han, Shuangfeng},
  journal=twc,
  volume={23},
  number={7},
  pages={6968--6982},
  year={2023},
  publisher={IEEE}
}

@article{shi2011iteratively,
  title={{An iteratively weighted MMSE approach to distributed sum-utility maximization for a MIMO interfering broadcast channel}},
  author={Shi, Qingjiang and Razaviyayn, Meisam and Luo, Zhi-Quan and He, Chen},
  journal=tsp,
  volume={59},
  number={9},
  pages={4331--4340},
  year={2011},
  publisher={IEEE}
}

@article{akdeniz2014millimeter,
  title={{Millimeter wave channel modeling and cellular capacity evaluation}},
  author={Akdeniz, Mustafa Riza and Liu, Yuanpeng and Samimi, Mathew K and Sun, Shu and Rangan, Sundeep and Rappaport, Theodore S and Erkip, Elza},
  journal=jsac,
  volume={32},
  number={6},
  pages={1164--1179},
  year={2014},
  publisher={IEEE}
}

@article{pan2020multicell,
  title={{Multicell MIMO communications relying on intelligent reflecting surfaces}},
  author={Pan, Cunhua and Ren, Hong and Wang, Kezhi and Xu, Wei and Elkashlan, Maged and Nallanathan, Arumugam and Hanzo, Lajos},
  journal=twc,
  volume={19},
  number={8},
  pages={5218--5233},
  year={2020},
  publisher={IEEE}
}

@article{xu2022channel,
  title={{Channel estimation for reconfigurable intelligent surface assisted high-mobility wireless systems}},
  author={Xu, Chao and An, Jiancheng and Bai, Tong and Sugiura, Shinya and Maunder, Robert G and Wang, Zhaocheng and Yang, Lie-Liang and Hanzo, Lajos},
  journal=tvt,
  volume={72},
  number={1},
  pages={718--734},
  year={2022},
  publisher={IEEE}
}

@article{li2020weighted,
  title={{Weighted sum-rate maximization for multi-IRS aided cooperative transmission}},
  author={Li, Zhengfeng and Hua, Meng and Wang, Qingxia and Song, Qingheng},
  journal=wcl,
  volume={9},
  number={10},
  pages={1620--1624},
  year={2020},
  publisher={IEEE}
}

@article{wei2021channel,
  title={{Channel estimation for RIS assisted wireless communications—Part I: Fundamentals, solutions, and future opportunities}},
  author={Wei, Xiuhong and Shen, Decai and Dai, Linglong},
  journal=commlett,
  volume={25},
  number={5},
  pages={1398--1402},
  year={2021},
  publisher={IEEE}
}

@article{yoon2023joint,
  title={{Joint user selection and beamforming design for multi-IRS aided internet-of-things networks}},
  author={Yoon, Seok-Hyun and Lim, Byungju and Vu, Mai and Ko, Young-Chai},
  journal=tvt,
  year={2023},
  publisher={IEEE}
}

@article{zheng2024cooperative,
  title={{Cooperative multi-satellite and multi-RIS beamforming: Enhancing LEO SatCom and mitigating LEO-GEO intersystem interference}},
  author={Zheng, Ziyuan and Jing, Wenpeng and Lu, Zhaoming and Wu, Qingqing and Zhang, Haijun and Gesbert, David},
  journal=jsac,
  year={2024},
  publisher={IEEE}
}

@article{toka2024ris,
  title={{RIS-Empowered LEO satellite networks for 6G: promising usage scenarios and future directions}},
  author={Toka, Mesut and Lee, Byungju and Seong, Jaehyup and Kaushik, Aryan and Lee, Juhwan and Lee, Jungwoo and Lee, Namyoon and Shin, Wonjae and Poor, H Vincent},
  journal={arXiv preprint arXiv:2402.07381},
  year={2024}
}

@article{li2023performance,
  title={{Performance analysis of multi-RIS-aided mmWave MIMO systems using Poisson point processes}},
  author={Li, Guang-Hui and Yue, Dian-Wu and Jin, Si-Nian and Hu, Qing},
  journal=splett,
  year={2023},
  publisher={IEEE}
}

@article{aldababsa2023multiple,
  title={{Multiple RISs-aided networks: Performance analysis and optimization}},
  author={Aldababsa, Mahmoud and Salhab, Anas M and Nasir, Ali Arshad and Samuh, Monjed H and da Costa, Daniel Benevides},
  journal=tvt,
  volume={72},
  number={6},
  pages={7545--7559},
  year={2023},
  publisher={IEEE}
}

@article{wang2020joint,
  title={{Joint transceiver and large intelligent surface design for massive MIMO mmWave systems}},
  author={Wang, Peilan and Fang, Jun and Dai, Linglong and Li, Hongbin},
  journal=twc,
  volume={20},
  number={2},
  pages={1052--1064},
  year={2020},
  publisher={IEEE}
}

@article{omid2023robust,
  title={{Robust beamforming design for an IRS-aided NOMA communication system with CSI uncertainty}},
  author={Omid, Yasaman and Shahabi, SM Mahdi and Pan, Cunhua and Deng, Yansha and Nallanathan, Arumugam},
  journal=twc,
  volume={23},
  number={2},
  pages={874--889},
  year={2023},
  publisher={IEEE}
}

@article{do2022line,
  title={{Line-of-sight MIMO via intelligent reflecting surface}},
  author={Do, Heedong and Lee, Namyoon and Lozano, Angel},
  journal=twc,
  year={2022},
  publisher={IEEE}
}

@article{mei2022intelligent,
  title={{Intelligent reflecting surface-aided wireless networks: From single-reflection to multireflection design and optimization}},
  author={Mei, Weidong and Zheng, Beixiong and You, Changsheng and Zhang, Rui},
  journal=ProcIEEE,
  volume={110},
  number={9},
  pages={1380--1400},
  year={2022},
  publisher={IEEE}
}

@article{zhou2020framework,
  title={{A framework of robust transmission design for IRS-aided MISO communications with imperfect cascaded channels}},
  author={Zhou, Gui and Pan, Cunhua and Ren, Hong and Wang, Kezhi and Nallanathan, Arumugam},
  journal=tsp,
  volume={68},
  pages={5092--5106},
  year={2020},
  publisher={IEEE}
}

@article{omid2021low,
  title={{Low-complexity robust beamforming design for IRS-aided MISO systems with imperfect channels}},
  author={Omid, Yasaman and Shahabi, Seyyed MohammadMahdi and Pan, Cunhua and Deng, Yansha and Nallanathan, Arumugam},
  journal=commlett,
  volume={25},
  number={5},
  pages={1697--1701},
  year={2021},
  publisher={IEEE}
}

@article{wang2023ris,
  title={{RIS-aided MIMO systems with hardware impairments: Robust beamforming design and analysis}},
  author={Wang, Jintao and Gong, Shiqi and Wu, Qingqing and Ma, Shaodan},
  journal=twc,
  volume={22},
  number={10},
  pages={6914--6929},
  year={2023},
  publisher={IEEE}
}

@article{heath2016overview,
  title={{An overview of signal processing techniques for millimeter wave MIMO systems}},
  author={Heath, Robert W and Gonzalez-Prelcic, Nuria and Rangan, Sundeep and Roh, Wonil and Sayeed, Akbar M},
  journal=jstsp,
  volume={10},
  number={3},
  pages={436--453},
  year={2016},
  publisher={IEEE}
}

@article{liu2021reconfigurable,
  title={{Reconfigurable intelligent surfaces: Principles and opportunities}},
  author={Liu, Yuanwei and Liu, Xiao and Mu, Xidong and Hou, Tianwei and Xu, Jiaqi and Di Renzo, Marco and Al-Dhahir, Naofal},
  journal=surv,
  volume={23},
  number={3},
  pages={1546--1577},
  year={2021},
  publisher={IEEE}
}

@article{zeng2022joint,
  title={{{Joint beamforming design for IRS aided multiuser MIMO with imperfect CSI}}},
  author={Zeng, Piao and Qiao, Deli and Qian, Haifeng and Wu, Qingqing},
  journal = tvt,
  volume={71},
  number={10},
  pages={10729--10743},
  year={2022},
  publisher={IEEE}
}

@article{jiang2023robust,
  title={{Robust design of IRS-aided multi-group multicast system with imperfect CSI}},
  author={Jiang, Weiheng and Xiong, Peiyun and Nie, Jiangtian and Ding, Zhiguo and Pan, Cunhua and Xiong, Zehui},
  journal=twc,
  year={2023},
  publisher={IEEE}
}

@inproceedings{xiu2021irs,
  title={{IRS-assisted millimeter wave communications: Joint power allocation and beamforming design}},
  author={Xiu, Yue and Zhao, Yang and Liu, Yang and Zhao, Jun and Yagan, Osman and Wei, Ning},
  booktitle=wcnc,
  pages={1--6},
  year={2021},
  organization={IEEE}
}

@article{zargari2020energy,
  title={{Energy efficiency maximization via joint active and passive beamforming design for multiuser MISO IRS-aided SWIPT}},
  author={Zargari, Shayan and Khalili, Ata and Zhang, Rui},
  journal=wcl,
  volume={10},
  number={3},
  pages={557--561},
  year={2020},
  publisher={IEEE}
}

@article{zhou2020intelligent,
  title={{Intelligent reflecting surface aided multigroup multicast MISO communication systems}},
  author={Zhou, Gui and Pan, Cunhua and Ren, Hong and Wang, Kezhi and Nallanathan, Arumugam},
  journal=tsp,
  volume={68},
  pages={3236--3251},
  year={2020},
  publisher={IEEE}
}

@article{shtaiwi2023sum,
  title={{Sum-rate maximization for RIS-assisted integrated sensing and communication systems with manifold optimization}},
  author={Shtaiwi, Eyad and Zhang, Hongliang and Abdelhadi, Ahmed and Swindlehurst, A Lee and Han, Zhu and Poor, H Vincent},
  journal=tcom,
  year={2023},
  publisher={IEEE}
}

@article{li2024joint,
  title={{Joint resource allocation and reflecting precoding design for RIS-assisted ISAC systems}},
  author={Li, Xiaohui and Wang, Hong and Chen, Yunpei and Sheng, Shuran},
  journal=wcl,
  year={2024},
  publisher={IEEE}
}

@article{han2020cooperative,
  title={{Cooperative double-IRS aided communication: Beamforming design and power scaling}},
  author={Han, Yitao and Zhang, Shuowen and Duan, Lingjie and Zhang, Rui},
  journal=wcl,
  volume={9},
  number={8},
  pages={1206--1210},
  year={2020},
  publisher={IEEE}
}

@article{huang2023multi,
  title={{Multi-IRS-aided millimeter-wave multi-user MISO systems for power minimization using generalized Benders decomposition}},
  author={Huang, Huan and Zhang, Ying and Zhang, Hongliang and Zhao, Zixin and Zhang, Chongfu and Han, Zhu},
  journal=twc,
  year={2023},
  publisher={IEEE}
}

@article{joudeh2016sum,
  title={{Sum-rate maximization for linearly precoded downlink multiuser MISO systems with partial CSIT: A rate-splitting approach}},
  author={Joudeh, Hamdi and Clerckx, Bruno},
  journal=tcom,
  volume={64},
  number={11},
  pages={4847--4861},
  year={2016},
  publisher={IEEE}
}

@article{choi2022energy,
  title={{Energy efficiency maximization precoding for quantized massive MIMO systems}},
  author={Choi, Jinseok and Park, Jeonghun and Lee, Namyoon},
  journal=twc,
  volume={21},
  number={9},
  pages={6803--6817},
  year={2022},
  publisher={IEEE}
}

@article{lin2021channel,
  title={{Channel estimation and user localization for IRS-assisted MIMO-OFDM systems}},
  author={Lin, Yuxing and Jin, Shi and Matthaiou, Michail and You, Xiaohu},
  journal=twc,
  volume={21},
  number={4},
  pages={2320--2335},
  year={2021},
  publisher={IEEE}
}

@article{yao2023robust,
  title={{Robust beamforming design for RIS-aided cell-free systems with CSI uncertainties and capacity-limited backhaul}},
  author={Yao, Jiacheng and Xu, Jindan and Xu, Wei and Ng, Derrick Wing Kwan and Yuen, Chau and You, Xiaohu},
  journal=tcom,
  volume={71},
  number={8},
  pages={4636--4649},
  year={2023},
  publisher={IEEE}
}

@article{al2024performance,
  title={{Performance of multi-RIS-aided cell-free massive MIMO: Do more RISs always help?}},
  author={Al-Nahhas, Bayan and Obeed, Mohanad and Chaaban, Anas and Hossain, Md Jahangir},
  journal={IEEE Trans. on Commun.},
  year={2024},
  publisher={IEEE}
}

@ARTICLE{lee2023max,
  author={Lee, Sangmin and Choi, Eunsung and Choi, Jinseok},
  journal={IEEE Wireless Commun. Lett.}, 
  title={{Max-Min Fairness Precoding for Physical Layer Security With Partial Channel Knowledge}}, 
  year={2023},
  volume={12},
  number={9},
  pages={1637-1641},
  keywords={Channel estimation;Precoding;Security;Covariance matrices;Downlink;Optimization;Array signal processing;Physical layer security;secure precoding;max-min fairness;generalized power iteration;imperfect CSIT},
  doi={10.1109/LWC.2023.3285797}}

@article{basar2024reconfigurable,
  title={{Reconfigurable intelligent surfaces for 6G: Emerging hardware architectures, applications, and open challenges}},
  author={Basar, Ertugrul and Alexandropoulos, George C and Liu, Yuanwei and Wu, Qingqing and Jin, Shi and Yuen, Chau and Dobre, Octavia A and Schober, Robert},
  journal={IEEE Veh. Tech. Mag.},
  year={2024},
  publisher={IEEE}
}

@article{yao2023superimposed,
  title={{Superimposed RIS-phase modulation for MIMO communications: A novel paradigm of information transfer}},
  author={Yao, Jiacheng and Xu, Jindan and Xu, Wei and Yuen, Chau and You, Xiaohu},
  journal=twc,
  volume={23},
  number={4},
  pages={2978--2993},
  year={2023},
  publisher={IEEE}
}

@article{guo2020weighted,
  title={{Weighted sum-rate maximization for reconfigurable intelligent surface aided wireless networks}},
  author={Guo, Huayan and Liang, Ying-Chang and Chen, Jie and Larsson, Erik G},
  journal=twc,
  volume={19},
  number={5},
  pages={3064--3076},
  year={2020},
  publisher={IEEE}
}

@article{hassibi2003much,
  title={{How much training is needed in multiple-antenna wireless links?}},
  author={Hassibi, Babak and Hochwald, Bertrand M},
  journal=tit,
  volume={49},
  number={4},
  pages={951--963},
  year={2003},
  publisher={IEEE}
}

@article{chen2025joint,
  title={{Joint Power Allocation and Phase Shifts Design for Distributed RIS-assisted Multiuser Systems}},
  author={Chen, Zhen and Chen, Gaojie and Zhang, Xiu Yin and Tang, Jie and Jin, Shi and Wong, Kai-Kit and Chambers, Jonathon},
  journal={IEEE Trans. on Mobile Comput.},
  year={2025},
  publisher={IEEE}
}

@article{wasfi2026unified,
  title={{Unified Mathematical Framework for Channel Estimation in VLC Systems With Signal-Dependent Noise}},
  author={Wasfi, Asma and Yaseen, Maysa and Awwad, Falah and Ikki, Salama S},
  journal=commlett}

@article{huang2019reconfigurable,
  title={{Reconfigurable intelligent surfaces for energy efficiency in wireless communication}},
  author={Huang, Chongwen and Zappone, Alessio and Alexandropoulos, George C and Debbah, M{\'e}rouane and Yuen, Chau},
  journal=twc,
  volume={18},
  number={8},
  pages={4157--4170},
  year={2019},
  publisher={IEEE}
}

@article{oh2025generalized,
  title={{Generalized Power Iteration-Based Precoding With Low-Complexity Matrix Inversion Approaches for Massive MIMO Systems}},
  author={Oh, Mintaek and Yoo, Seunghyeong and Lee, Namyoon and Choi, Jinseok},
  journal=tvt,
  year={2025},
  publisher={IEEE}
}

@article{nesterov2005smooth,
  title={{Smooth minimization of non-smooth functions}},
  author={Nesterov, Yu},
  journal={Mathematical programming},
  volume={103},
  number={1},
  pages={127--152},
  year={2005},
  publisher={Springer}
}

\end{document}